\newclass{\LBAACCEPTANCE}{LBA\text{--}ACCEPTANCE}
\newclass{\SEQCONTROL}{SEQ\text{--}CONTROL}
\title{A P Systems Variant for Reasoning about Sequential
  Controllability of Boolean Networks%
  \thanks{This is a revised and extended version of \cite{AlhazovFI2022}.}
}
\author{Artiom Alhazov\inst{1} \and
  Vincent Ferrari-Dominguez\inst{2} \and
  Rudolf Freund\inst{3} \and
  Nicolas~Glade\inst{4} \and
  Sergiu Ivanov\inst{5}\thanks{Corresponding author.}
}
\institute{
Vladimir Andrunachievici Institute of Mathematics and Computer Science,\\
The State University of Moldova\\
Academiei 5, Chi\c sin\u au, MD-2028, Moldova\\
{\sf{artiom@math.md}}\\[0.7em]
\and
\'{E}cole Normale Sup\'{e}rieure -- PSL University, CNRS UMR 8553,\\
45, rue d'Ulm, F-75230 Parix Cedex 3, France
\\
{\sf{vincent.ferrari@ens.psl.eu}}\\[0.7em]
\and
Faculty of Informatics, TU Wien\\
Favoritenstra\ss{}e 9--11, 1040 Wien, Austria\\
{\sf{rudi@emcc.at}}\\[0.7em]
\and
Universit\'{e} Grenoble Alpes, CNRS UMR5525, CHU Grenoble Alpes, Grenoble INP\\ TIMC-IMAG, F-38000 Grenoble, France\\
{\sf{nicolas.glade@univ-grenoble-alpes.fr}}\\[0.7em]
\and
Universit\'{e} Paris-Saclay, Univ. \'{E}vry, IBISC,\\
23 boulevard de France, 91034 \'{E}vry, France\\
{\sf{sergiu.ivanov@ibisc.univ-evry.fr}}\\[0.7em]
}
\newcommand{\dottimes}{\mathop{\dot{\times}}}
\begin{document}
\maketitle

\begin{abstract}
  A Boolean network is a discrete dynamical system operating on
  vectors of Boolean variables.  The action of a Boolean network can
  be conveniently expressed as a system of Boolean update functions,
  computing the new values for each component of the Boolean vector as
  a function of the other components.  Boolean networks are widely
  used in modelling biological systems that can be seen as consisting
  of entities which can be activated or deactivated, expressed or
  inhibited, on or off.  P systems on the other hand are classically
  introduced as a model of hierarchical multiset rewriting.  However,
  over the years the community has proposed a wide range of P system
  variants including diverse ingredients suited for various needs.
  In this work, we propose a new variant---Boolean
  P systems---specifically designed for reasoning about sequential
  controllability of Boolean networks, and use it to first establish
  a crisp formalization of the problem, and then to prove that the
  problem of sequential controllability is \PSPACE-complete.
  We further claim that Boolean P systems are a demonstration of how
  P systems can be used to construct ad hoc formalisms,
  custom-tailored for reasoning about specific problems, and providing
  new advantageous points of view.
\end{abstract}

\section{Introduction}
\label{sec:intro}

Membrane computing and P systems are a paradigm of massively parallel
computing introduced more than two decades ago by
Gh. P\u{a}un~\cite{Paun2000}, and inspired by the structure and the
functioning of the biological cell.  Following the example of the cell,
a membrane (P) system is a hierarchical membrane structure with
compartments containing multisets of objects, representing in an
abstract sense the biochemical species.  Multiset rewriting rules are
attached to every membrane to represent the reactions.  Over the last
two decades, a considerable number of variants of P systems have been
introduced, inspired by various aspects of cellular life, or capturing
specific computing properties.  For comprehensive overviews we refer
the reader to \cite{imcs,handMC}.

Even though P systems are directly inspired by the biological cell,
their use for actual cellular modelling has encountered relatively
little success.  On the other hand, Boolean networks have been quite
successful recently, despite their relative dissimilarity to
biological structures—a Boolean network is a set of Boolean variables
equipped with Boolean update functions, describing how to compute the
new value of the variables from their current values.  We refer the
reader to~\cite{AlhazovFI2022} for a more detailed impression.

One application of interest of Boolean networks is controllability—the
problem of deciding whether externally modifying some parameters of
a system can make it reach a particular state, and finding the
necessary
modifications~\cite{BianeD19,FumiaM2013,PaulSPM2018,SuPP2019,VonDerHeydeBHSKB2014}.
A variant of this problem which has attracted particular attention is
sequential controllability: instead of looking for a particular
combination of control inputs, find a \emph{sequence} of control
inputs to guide the system to a given
state~\cite{Mandon19,MandonHP17,MandonSH0P19,MandonSPPHP19,Pardo22,PardoID21}.
Sequential controllability is promising because it may allow reducing
the total number of control actions, or may even drive the Boolean
network along trajectories which would otherwise be inaccessible.
On the other hand, sequential controllability is
\PSPACE-hard~\cite{PardoID21}, making it a difficult problem
to tackle.

The goal of this paper is to show how to combine the modelling power
of Boolean networks with the richness of P systems to reason about and
prove some properties of sequential controllability of Boolean
networks.  We construct a P system variant to satisfy the following
two properties simultaneously:
\begin{enumerate}\setlength\itemsep{1mm}
\item represent sequential controllability of Boolean control networks
  via simple syntax transformations,
\item have \PSPACE-complete reachability.
\end{enumerate}
This formalization of sequential controllability allows us to complete
the complexity result from~\cite{PardoID21} by proving that this
problem is \PSPACE-complete.  We would like to use this construction
to promote P system variants as a general tool for building ad hoc
formalisms specifically tailored for tackling particular problems.

This paper is structured as follows.  Section~\ref{sec:prelim} briefly
recalls all the necessary preliminaries: linear bounded automata,
P systems, Boolean networks, sequential controllability.
Section~\ref{sec:boolp} introduces the specific P system variant for
tackling sequential controllability: Boolean P systems.
Section~\ref{sec:boolp-bn} shows how Boolean P~systems can directly
simulate Boolean networks.  Section~\ref{sec:composition} introduces
composition of Boolean P systems in the spirit of automata theory, and
Section~\ref{sec:boolp-seq} shows how composite Boolean P systems can
capture a Boolean network together with the master dynamical system
emitting the control inputs.  In Section~\ref{sec:boolp-reach}, we
show that the reachability problem for Boolean P systems is
\PSPACE-complete, and we use this result in
Section~\ref{sec:seq-pspace} to show that sequential controllability
of Boolean networks is \PSPACE-complete as well.  Finally, in
Section~\ref{sec:conclusion} we extensively discuss the obtained
technical results concerning sequential controllability, the features
of Boolean P systems, and the general methodology of designing ad hoc
formalisms custom-tailored to specific problems.

\section{Preliminaries}
\label{sec:prelim}
In this section, we briefly recall the necessary preliminaries, in
particular deterministic bounded automata, P systems, Boolean
networks, Boolean Control Networks (BCN), and sequential
controllability of BCN.

Given two sets $A$ and $B$, we denote by $B^A$ the set of all
functions $f : A \to B$.  We denote by $2^A$ the set of all subsets of
$A$ (the power set of $A$) and by $|A|$ the cardinal of the set $A$.
An indicator function of a subset $C \subseteq A$ is the function
$i_C : A \to \{0,1\}$ with the property that
$C = \{a \mid i_C(a) = 1\}$.  In this paper, we will often use the
same symbol to refer to a subset and to its indicator function.

\subsection{Deterministic Linear Bounded Automata (LBA)}
A deterministic linear bounded automaton (deterministic LBA or simply
LBA) $\mathcal{M}$ is a construct
\[
  \mathcal{M} = (Q,V,T_1,T_2,\delta, q_0, q_1, Z_l, B, Z_r),
\]
where:
\begin{itemize}
\item $Q$ is a finite set of states,
\item $V$ is the finite tape alphabet,
\item $T_1 \subseteq V\setminus \lbrace Z_l, B,Z_r\rbrace$ is the input alphabet,
\item $T_2 \subseteq V\setminus \lbrace Z_l, B, Z_r \rbrace$ is the output alphabet,
\item $\delta : Q \times V \rightarrow Q \times V \times \lbrace L,R,S
  \rbrace$ is the transition function,
\item $q_0$ is the initial state,
\item $q_1$ is the final state,
\item $Z_l \in V$ is the left boundary marker,
\item $B \in V$ is the blank symbol,
\item $Z_r \in V$ is the right boundary marker,
\end{itemize}
We restrict the transition function such that the automaton can never
write over the boundary markers or exceed them, more precisely:
\[
  \begin{array}{ll}
    \forall q \in Q : & \delta(q,Z_l) \in Q \times \{Z_l\} \times \{R,S\},\text{ and} \\[1mm]
    \forall q \in Q : & \delta(q,Z_r) \in Q \times \{Z_r\} \times \{L,S\}.
  \end{array}
\]

A configuration of the automaton will be written as
$Z_l u \, q \underline{a}\, v Z_r$, where
$a \in V \setminus \{Z_l, Z_r\}$,
$u,v \in (V \setminus \{Z_l, Z_r\})^*$.  The state $q$ is written to
the left of the underlined tape symbol $a$ on which the head of the
automaton currently stands.

Suppose the LBA is in state $q$ and reads the symbol $a$ on the tape.
If $\delta(q,a) = (p,b,D)$, one of the following transitions occurs,
depending on the value of $D \in \{L, R, S\}$:
\[
  \begin{array}{lll ll}
    Z_l u c \, q \underline{a}\, v Z_r &\Rightarrow& Z_l u \, p \underline{c}\, b v Z_r, &&\mbox{ if } D = L,\mbox{ where } c \in V,\\[1mm]
    Z_l u \, q \underline{a}\, c v Z_r &\Rightarrow& Z_l u b \, p \underline{c} \, v Z_r, &\hspace{3mm}&\mbox{ if } D = R, \mbox{ where } c \in V, \\[1mm]
    Z_l u \, q \underline{a}\, v Z_r &\Rightarrow& Z_l u \,p \underline{b}\, v Z_r, &&\mbox{ if } D = S. \\[1mm]
  \end{array}
\]

Due to the restriction of the transition function, the accessible part
of the tape is limited to the input plus the two delimiters $Z_l$ and
$Z_r$. Another model of LBA consists in restricting the size of the
accessible part of the tape to a linear function of the input, which
is the origin of the name \emph{linear} bounded automaton. The two
models have the same computational power~\cite{gareyjohnson}.

An LBA accepts the input $x \in V^*$ if starting with the
configuration $Z_l q_0 x Z_r$ it reaches a configuration of the form
$Z_l q_1 \{B\}^* Z_r$.  Given an LBA $\mathcal{M}$ and an input $x$,
the $\LBAACCEPTANCE$ problem consists in deciding whether
$\mathcal{M}$ accepts~$x$.  This problem is \PSPACE-complete
\cite{gareyjohnson}.

\subsection{P Systems}
In this subsection, we give a general overview of P systems.  For more
details, we refer the reader to~\cite{imcs,handMC}.  A P system is
a construct
\[
  \Pi = (O, T, \mu, w_1,\ldots,w_n, R_1,\ldots R_n, h_i, h_o),
\]
where $O$ is the alphabet of objects, $T\subseteq O$ is the alphabet
of terminal objects, $\mu$ is the membrane structure injectively
labelled by the numbers from $\{1,\ldots,n\}$ and usually given by
a sequence of correctly nested brackets, $w_i$ are the multisets
giving the initial contents of each membrane $i$ ($1\leq i\leq n$),
$R_i$ is the finite set of rules associated with membrane $i$
($1\leq i\leq n$), and $h_i$ and $h_o$ are the labels of the input and
the output membranes, respectively ($1\leq h_i\leq n$,
$1\leq h_o\leq n$).

Quite often, the rules associated with membranes are multiset
rewriting rules (or special cases of such rules).  Multiset rewriting
rules have the form $u\to v$, with
$u\in O^\circ\setminus\{\textbf{0}\}$ and $v\in O^\circ$, where
$O^\circ$ is the set of multisets over $O$, and $\mathbf{0}(a) = 0$,
for all $a \in O$.  If $|u| = 1$, the rule $u\to v$ is called
non-cooperative; otherwise it is called cooperative.  In communication
P systems, rules are additionally allowed to send symbols to the
neighbouring membranes.  In this case, for rules in $R_i$,
$v\in (O\times \mathit{Tar}_i)^\circ$, where $\mathit{Tar}_i$ contains
the symbols $\mathit{out}$ (corresponding to sending the symbol to the
parent membrane), $\mathit{here}$ (indicating that the symbol should
be kept in membrane $i$), and $\mathit{in}_h$ (indicating that the
symbol should be sent into the child membrane $h$ of membrane $i$).
When writing out the multisets over $O \times \mathit{Tar}_i$, the
indication $\mathit{here}$ is often omitted.

In P systems, rules are often applied in a maximally parallel way: in
one derivation step, only a non-extendable multiset of rules can be
applied.  The rules are not allowed to consume the same instance of
a symbol twice, which creates competition for objects and may lead to
non-deterministic choice between the maximal collections of rules
applicable in one step.

A computation of a P system is traditionally considered to be
a sequence of configurations it can successively visit, stopping at
the halting configuration.  A halting configuration is a configuration
in which no rule can be applied any more, in any membrane.  The result
of a computation of a P system $\Pi$ as defined above is the contents
of the output membrane $h_o$ projected over the terminal alphabet $T$.
\begin{example}
  Figure~\ref{fig:example-p} shows the graphical representation of the
  P system formally given by
  \[
    \begin{array}{lcl}
      \Pi & = & (\{a, b, c, d\}, \{a,d\}, [_1[_2]_2]_1, R_1, R_2, 1, 1), \\
      R_2 & = & \{a \to aa, b \to b\, (c,\mathit{out})\},                \\
      R_1 & = & \emptyset.
    \end{array}
  \]

  \begin{figure}[h]
    \centering
    \begin{tikzpicture}
      \tikzstyle membrane=[draw,rounded corners=1mm,inner sep=2mm]
      \node[membrane,align=left] (mem2) {$a \to aa$\\
        $b \to b \, (c,\mathit{out})$\\[2mm]
        \hspace{7mm}$ab$};
      \node[right=0mm of mem2.south east] (lab2) {\scriptsize 2};
      \node[right=3mm of mem2] (skin content) {$d$};
      \node[membrane,fit={(mem2) (lab2) (skin content)}] (skin) {};
      \node[right=0mm of skin.south east] {\scriptsize 1};
    \end{tikzpicture}
    \caption{An example of a simple P system.}
    \label{fig:example-p}
  \end{figure}
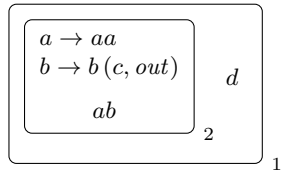

  In the maximally parallel mode, the inner membrane 2 of $\Pi$ will
  apply as many instances of the rules as possible, thereby doubling
  the number of $a$, and ejecting a copy of $c$ into the surrounding
  (skin) membrane at each step.  The symbol $d$ in the skin membrane
  is not used.  Therefore, after $k$ steps of evolution, membrane~2
  will contain the multiset $a^{2^k} b$ and membrane 1 the multiset
  $c^k d$.  Since all rules are always applicable in $\Pi$, this
  P system never halts. \qed
\end{example}

\subsection{Boolean Networks}
A Boolean variable is a variable which may only have values in the
Boolean domain $\{0,1\}$.  Let $X$ be a finite set of Boolean
variables.  A state of these variables is any function
$s : X \to \{0, 1\}$, $s \in \{0, 1\}^X = S_X$ assigning a Boolean
value to every single variable.  An update function is a Boolean
function computing a Boolean value from a state:
$f : S_X \to \{0,1\}$.  A Boolean network over $X$ is a function
$F : S_X \to S_X$, in which the update function for a variable
$x \in X$ is computed as a projection of $F$: $f_x(s) = F(s)_x$.

A Boolean network $F$ computes trajectories on states by updating its
variables according to a (Boolean) mode $M \subseteq 2^X$, defining
the variables which should be updated together in a step.
Typical examples of modes are the synchronous mode
$\mathit{syn} = \{X\}$ and the asynchronous mode
$\mathit{asyn} = \{\{x\} \mid x \in X\}$.  A trajectory $\tau$ of
a Boolean network under a given mode $M$ is any finite sequence of
states $\tau = (s_i)_{0 \leq i \leq n}$ such that $F$ can derive
$s_{i+1}$ from $s_i$ under the mode $M$.

\begin{remark}
  The definition of modes and evolution are quite different in
  P systems and Boolean networks.  The asynchronous mode in Boolean
  networks only allows updating one variable at a time, while the
  asynchronous mode in P systems generally allows any combinations of
  updates.  Furthermore, no halting conditions are generally
  considered in Boolean networks, and the asymptotic behavior is often
  looked at as the important part of the dynamics. \qed
\end{remark}

\begin{example}\label{ex:bool}
  Consider the set of variables $X = \{x, y\}$ with the corresponding
  update functions $f_x(x, y) = \bar x \wedge y$ and
  $f_y(x, y) = x \wedge \bar y$.  Figure~\ref{fig:example-bool} shows
  the possible state transitions of this network under the synchronous
  and the asynchronous modes.  The states are represent as pairs of
  binary digits, e.g. $01$ stands for the state in which $x = 0$ and
  $y = 1$.

  \begin{figure}[h]
    \centering
    \begin{tikzpicture}[node distance=5mm,baseline]
      \node (00) {$00$};
      \node[right=of 00] (01) {$01$};
      \node[below=of 01] (10) {$10$};
      \node[below=of 00] (11) {$11$};

      \draw[->] (00) to[out=150,in=-150,looseness=4] (00);
      \draw[->] (11) to (00);
      \draw[<->] (01) to (10);
    \end{tikzpicture}
    \hspace{15mm}
    \begin{tikzpicture} [node distance=5mm,baseline]
      \node (00) {$00$};
      \node[right=of 00] (01) {$01$};
      \node[below=of 01] (10) {$10$};
      \node[below=of 00] (11) {$11$};

      \draw[->] (00) to[out=150,in=-150,looseness=4] (00);
      \draw[->] (01) to (00);
      \draw[->] (10) to (00);

      \draw[<->] (01) to (11);
      \draw[<->] (10) to (11);
    \end{tikzpicture}
    \caption{The synchronous (left) and the asynchronous (right)
      dynamics of the Boolean network in Example~\ref{ex:bool}.}
    \label{fig:example-bool}
  \end{figure}
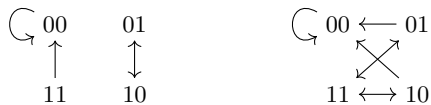

  We notice that, under the synchronous mode, this network exhibits
  three kinds of behaviors.  If initialized to $00$, it will stay in
  this state forever---$00$ is a stable state.  If initialized to
  $11$, the network will directly converge to $00$.  Finally, if it is
  initialized to any one of the states $01$ or $10$, it will oscillate
  between them.  The synchronous mode yields deterministic behavior.

  The state transitions are quite different under the asynchronous
  mode, under which only one variable may be updated at a time.
  While state $00$ remains stable, states $01$ and $10$ can now
  oscillate to $11$, but not directly between them.  Moreover, these
  states can also converge to $00$, but $11$ cannot anymore. \qed
\end{example}

\subsection{Boolean Control Networks (BCN)}
Boolean networks are often used to represent biological networks in
the presence of external perturbations: environmental hazards, drug
treatments, etc. (e.g.,~\cite{Barabasi2011,BianeD19,PardoID21}). To
represent network reprogramming, an extension of Boolean networks can
be considered: Boolean control networks (BCN)~\cite{BianeD19}.
Informally, a BCN is a parameterized Boolean network template;
assigning a Boolean value to every single one of its parameters yields
a Boolean network.

Formally, a Boolean control network is a function
$F_U : S_U \to (S_X \to S_X)$, where the elements of $U$,
$U \cap X = \emptyset$, are called the control inputs.  To every
valuation of control inputs, $F_U$ associates a Boolean network.
A control $\mu$ of $F_U$ is any Boolean assignment to the control
inputs: $\mu : U \to \{0, 1\}$.

\smallskip

While this definition of BCNs is very general, in practice one
restricts the impact the control inputs may have on the BCN to some
biologically relevant classes.  One particularly useful class are
freeze perturbations, in which a variable in $X$ is temporarily frozen
to 0 or to 1, independently of its normal update function.
These actions mean to model gene knock-outs and knock-ins.

When Boolean update functions are written as propositional formulae,
freeze control inputs can be written directly in the formulae.
Consider for example a Boolean network $F$ over $X = \{x_1, x_2\}$
with the update functions $f_1 = x_1 \wedge x_2$ and $f_2 = x_2$.
To allow for freezing of $x_1$, we introduce the control variables
$U = \{u_1^0, u_1^1\}$ into the Boolean formula of $f_1$ in the
following way: $f'_1 = (x_1 \wedge x_2) \wedge u_1^0 \vee \overline{u_1^1}$.
Setting $u_1^0$ to 0 and $u_1^1$ to 1 freezes $x_1$ to 0,
independently of the values of $x_1$ and $x_2$.  Symmetrically,
setting $u_1^1$ to 0 and $u_1^0$ to 1 freezes $x_1$ to 1.
Setting both $u_1^0$ and $u_1^1$ to 0 is generally disallowed.

In this paper, we will use two notations to indicate which control
inputs correspond to which controlled variable.  In the simplest
examples in which the variables have no indices, e.g. $x$ or $y$, we
will directly specify the name of the variable in the subscript of the
corresponding control inputs, like so: $u_x^0$, $u_x^1$, $u_y^0$, or
$u_y^1$.  In more general cases, we will refer to the variables by
indexed names $x_i$, and we will only specify the respective index as
the subscript of the corresponding control inputs: $u_i^0$ and
$u_i^1$.

\subsection{Sequential Controllability of BCN}
\label{sec:seq-bcn}

In many situations, perturbations of biological networks do not happen
once, but rather accumulate or evolve over
time~\cite{Fearon1990,Lee2012,PardoID21}.  In the language of Boolean
control networks, this accumulation can be represented by
sequences of controls $(\mu_1, \dots, \mu_n)$.  More precisely, take
a BCN $F_U$ with the variables $X$ and the control inputs $U$, as well
as a sequence of controls $\mu_{[n]} = (\mu_1, \dots, \mu_n)$,
$\mu_i : U \to \{0, 1\} \in S_U$.  This gives rise to a sequence of
Boolean networks $\left(F_U(\mu_1), \dots, F_U(\mu_n)\right)$.
Fix a mode $M$ and consider a sequence of trajectories
$(\tau_1, \dots, \tau_n)$ of these Boolean networks.  Such a sequence
is an evolution of $F_U$ under the sequence of controls $\mu_{[n]}$ if
the last state of every $\tau_i$ is the first state of $\tau_{i+1}$.
In this case we can speak of the trajectory of the BCN $F_U$ under the
control sequence $\mu_{[n]}$ as the concatenation of the individual
trajectories $\tau_i$, in which the last state of every single
$\tau_i$ is glued together with the first state of~$\tau_{i+1}$.

Given the 3-tuple $(F_U, S_\alpha, S_\omega)$, where $F_U$ is a BCN,
$S_\alpha$ is a set of starting states, and $S_\omega$ is a set of
target states, the sequence inference problem is the problem of
inferring a control sequence driving $F_U$ from each state in
$S_\alpha$ to any state in $S_\omega$.  This problem was called the
CoFaSe problem in~\cite{PardoID21} and was extensively studied.
In particular, is was shown that CoFaSe is \PSPACE-hard.

\begin{example}\label{ex:bcn}
  Consider again the Boolean network from Example~\ref{ex:bool}, with
  $X = \{x, y\}$ and the update functions $f_x = \bar x \wedge y$ and
  $f_y = x \wedge \bar y$.  As mentioned before, a convenient way to
  express freezing controls is by explicitly including the control
  inputs into the update functions in the following way:
  \[
    \renewcommand{\arraystretch}{1.3}
    \begin{array}{lcl}
      f'_x & = & (\bar x \wedge y) \wedge u_x^0 \vee \overline{u_x^1}, \\
      f'_y & = & (x \wedge \bar y) \wedge u_y^0 \vee \overline{u_y^1}. \\
    \end{array}
  \]
  Notice how setting $u_x^0$ to 0 essentially sets $f_x' = 0$, and
  setting $u_x^1$ to 0 essentially sets $f_x' = 1$, independently of
  the actual value of $x$ or $y$.

  \smallskip

  Consider now the following 3 controls:
  \[
    \renewcommand{\arraystretch}{1.3}
    \begin{array}[lcl]{lcl}
      \mu_1 & = & \{u_x^0 \leftarrow 1, u_x ^1 \leftarrow 1, u_y^0 \leftarrow 1, u_y^1 \leftarrow 1\},             \\
      \mu_2 & = & \{\underline{u_x^0 \leftarrow 0}, u_x ^1 \leftarrow 1, u_y^0 \leftarrow 1, u_y^1 \leftarrow 1\}, \\
      \mu_3 & = & \{u_x^0 \leftarrow 1, u_x ^1 \leftarrow 1, u_y^0 \leftarrow 1, \underline{u_y^1 \leftarrow 0}\}. \\
    \end{array}
  \]
  Informally $\mu_1$ does not freeze any variables, $\mu_2$ freezes
  $x$ to 0, and $\mu_3$ freezes $y$ to~1.  Consider now the BCN $F_U$
  with the variables $X = \{x,y\}$ and the controlled update functions
  $f_x'$ and $f_y'$.  Fix the synchronous update mode.  A trajectory
  of this BCN under the control $\mu_1$---i.e. a trajectory of
  $F_U(\mu_1)$---is $\tau_1 : 01 \to 10 \to 01$.  A trajectory of
  $F_U(\mu_2)$ is $\tau_2 : 01 \to 00 \to 00$; remark that $00$ is
  still a stable state of $F_U(\mu_2)$.  A trajectory of $F_U(\mu_3)$
  is $\tau_3 : 00 \to 01 \to 11$.  We can now glue together the
  trajectories $\tau_1$, $\tau_2$, and $\tau_3$ by identifying their
  respective ending and starting states, and we obtain the following
  trajectory of the BCN $F_U$ under the control sequence
  $\mu_{[3]} = (\mu_1, \mu_2, \mu_3)$:
  \[
    \tau : 01 \to 10 \to 01 \to 00 \to 00 \to 01 \to 11.
  \]
  It follows from this construction that $\mu_{[3]}$ is a solution for
  the CoFaSe problem $(F_U, \{01\}, \{11\})$.  Remark that $11$ is not
  reachable from $01$ under the synchronous mode in the uncontrolled case, as
  Figure~\ref{fig:example-bool} illustrates. \qed
\end{example}

\begin{remark}
  We follow the approach from~\cite{PardoID21} which decorrelates the
  length of the control sequence from the length of the trajectories
  it yields.  Thus, $\mu_{[3]}$ can yield trajectories of different
  lengths greater or equal to 3.  From the modeling standpoint, this
  represents the fact that the time scale on which control inputs are
  emitted is not necessarily the same as the time scale of the
  controlled system.
\end{remark}

\section{Boolean P Systems}
\label{sec:boolp}
In this section we introduce a new variant of P systems---Boolean
P systems---tailored specifically to capture sequential
controllability of Boolean networks with as little descriptional
overhead as possible.  We further tackle the differences between
evolution modes in Boolean networks and P systems by
introducing quasimodes.

Rather than trying to be faithful to the original model of P systems
as recalled in Section~\ref{sec:prelim}, we here invoke the intrinsic
flexibility of the domain to design a variant fitting to our specific
use case.

\subsection{Formalism}
Boolean P systems are set rewriting systems.  A Boolean state
$s : X \to \{0, 1\}$ is represented as the subset of $X$ obtained by
considering $s$ as an indicator function:
$\{x \in X \mid s(x) = 1 \}$.  By abuse of notation, we will sometimes
use the symbol $s$ to refer both to the Boolean state and to the
corresponding subset of $X$.

A Boolean P system is the following construct:
\[
  \Pi = (V, R),
\]
where $V$ is the alphabet of symbols, and $R$ is a set of rewriting
rules with propositional guards.  A rule $r \in R$ is of the form
\[
  r : A \to B \mid \varphi,
\]
where $A, B \subseteq X$ and $\varphi$ is the guard---a propositional
formula with variables from $V$.  The rule $r$ is applicable to a set
$W \subseteq V$ if $A \subseteq W$ and $W \in \varphi$, where by abuse
of notation we use the same symbol $\varphi$ to indicate the set of
subsets of $V$ which satisfy $\varphi$.  Formally, for
$W \subseteq V$, we denote by $\varphi(W)$ the truth value of the
formula obtained by replacing all variables appearing in $W$ by
1 in $\varphi$, and by 0 all variables from $V \setminus W$.  Then the
set of subsets satisfying $\varphi$ is
$\varphi = \{W \subseteq V \mid \varphi(W) \equiv 1\}$.

Applying the rule $r : A \to B \mid \varphi$ to a set $W$ results in
the set $(W \setminus A) \cup B$.  Applying a finite set of separately
applicable rules $\{r_i : A_i \to B_i \mid \varphi_i\}$ to $W$ results
in the new set
\[
  \left(W \setminus \bigcup_i A_i \right) \cup \bigcup_i B_i.
\]
Note how this definition excludes competition between the rules, as
only individual applicability is checked.  Further note that applying
a rule multiple times to the same configuration has exactly the same
effect as applying it once.

In P systems, the set of multisets of rules of $\Pi$ applicable to
a given configuration $W$ is usually denoted by
$\mathit{Appl}(\Pi, W)$~\cite{FreundV2007}.  Since in Boolean
P systems multiple applications of rules need not be considered, we
will only look at the set of \emph{sets} of rules applicable to
a given configuration $W$ of a Boolean P system $\Pi = (V, R)$, and
use the same notation $\mathit{Appl}(\Pi, W)$.  A mode $M$ of $\Pi$
will then be a function assigning to any configuration $W$ of $\Pi$
a set of sets of rules applicable to $W$:
$M(W) \subseteq \mathit{Appl}(\Pi, W)$.
If $|M(W)| \leq 1$ for any $W \subseteq V$, the mode $M$ is called
deterministic\footnote{More precisely, this is the definition of
  strong determinism, see~\cite{AlhazovFM12}.}.  Otherwise it is
called non-deterministic.

An evolution of $\Pi$ under the mode $M$ is a sequence of states
$(W_i)_{0 \leq i \leq k}$ with the property that $W_{i+1}$ is obtained
from $W_i$ by applying one of the sets of rules $R' \in M(W_i)$
prescribed by the mode $M$ in state $W_i$.  This is usually written as
$W_i \overset{R'}{\longrightarrow} W_{i+1}$.  If no rules are
applicable in state $W_k$, it is called the halting state, and
$(W_i)_{0 \leq i \leq k}$ is called a halting evolution.

\begin{example}
  Take $V = \{a, b\}$ and consider the following rules
  $r_1 : \{a, b\} \to \{a\} \mid \mathbf{1}$ and
  $r_2 : \{a\} \to \emptyset \mid \bar b$, where $\mathbf{1}$ is the
  Boolean tautology.  Construct the Boolean P system
  $\Pi = (V, \{r_1, r_2\})$.  Informally, $r_1$ removes $b$ from
  a configuration which contains $a$ and $b$, and $r_2$ removes $a$
  from the configuration which does not already contain $b$.
  A possible trajectory of $\Pi$ under the maximally parallel
  mode---which applies non-extendable applicable sets of rules---is
  $\{a, b\} \to \{a\} \to \emptyset$.  Note that only $r_1$ is
  applicable in the first step, since $r_2$ requires the configuration
  to not contain $b$. \qed
\end{example}

\begin{remark}\label{rem:boolp-rs}
  Boolean P systems as defined here are very close to other set
  rewriting formalisms, and in particular to reaction
  systems~\cite{EhrenfeuchtR2007}.  A reaction system $\mathcal A$
  over a set of species $S$ is a set of reactions (rules) of the form
  $a : (R_a, I_a, P_a)$, in which $R_a \subseteq S$ is called the set
  of reactants, $I_a \subseteq S$ the set of inhibitors, and
  $P_a \subseteq S$ the set of products.  For $a$ to be applicable to
  a set $W$, it must hold that $R_a \subseteq W$ and
  $I_a \cap W = \emptyset$.  Applying such a reaction to $W$ yields
  $P_a$, i.e. the species which are not explicitly sustained by the
  reactions disappear.

  We claim that despite their apparent similarity and tight
  relationship with Boolean functions, reaction systems are not so
  good a fit for reasoning about Boolean networks as Boolean
  P systems.  In particular:
  \begin{enumerate}
    \item Reaction systems lack modes and therefore non-determinism,
          which may appear in Boolean networks under the asynchronous
          Boolean mode.
    \item The rule applicability condition is more powerful in Boolean
          P systems, and closer to Boolean functions than in
          reaction systems.
    \item Symbols in reaction systems disappear unless sustained by
          a rule, which represents the degradation of species in
          biochemistry, but which makes reaction systems harder to use to
          directly reason about Boolean networks.
  \end{enumerate}
  We recall that our main goal behind introducing Boolean P systems is
  reasoning about Boolean networks in a more expressive framework.
  This means that zero-overhead representation of concepts from
  Boolean networks is paramount. \qed
\end{remark}

\begin{remark}\label{rem:rs-control}
  Reaction systems~\cite{EhrenfeuchtR2007} are intrinsically
  interesting for discussing controllability, because they are defined
  as open systems from the start, via the explicit introduction of
  context.  Note however that contexts only allow adding symbols to
  the configuration, not removing them.  We refer to~\cite{IvanovP20}
  for an in-depth discussion of controllability of reaction
  systems. \qed
\end{remark}

\subsection{Quasimodes}
\label{sec:quasimodes}

An update function in a Boolean network can always be computed, but
a rule in a Boolean P system need not always be applicable.  This is
the reason behind the difference in the way modes are defined in the
two formalisms: in Boolean networks a mode is essentially a set of
subsets of update functions, while in Boolean P~systems a mode is
a function incorporating applicability checks.  This means in
particular that Boolean network modes are not directly transposable to
Boolean P systems.

To better bridge the two different notions of modes, we introduce
quasimodes.  A \emph{quasimode} $\tilde M$ of a P system
$\Pi = (V, R)$ is any set of sets of rules: $\tilde M \subseteq 2^R$.
The mode $M$ corresponding to the quasimode $\tilde M$ is derived in
the following way:
\[
  M(W) = \tilde M \cap \mathit{Appl}(\Pi, W).
\]
Given a configuration $W$ of $\Pi$, $M$ picks only those sets of rules
from $\tilde M$ which are also applicable to $W$.  Thus, instead of
explicitly giving the rules to be applied to a given configuration of
a P system $W$, a quasimode advises the rules to be applied.

In the rest of the paper, we will say ``evolution of $\Pi$ under the
quasimode~$\tilde M$'' to mean ``evolution of $\Pi$ under the mode
derived from the quasimode $\tilde M$''.

\section{Boolean P Systems Simulate Boolean Networks}
\label{sec:boolp-bn}

Consider a Boolean network $F$ over the set of variables $X$, and take
a variable $x \in X$ with its corresponding update function $f_x$.
The update function $f_x$ can be simulated by two Boolean P systems
rules: the rule corresponding to setting $x$ to 1, i.e. introducing
$x$ into the configuration, and the rule corresponding to setting $x$
to 0, i.e. erasing $x$ from the configuration:
\[
  R_x = \{\;\; \emptyset \to \{x\} \mid f_x, \;\; \{x\} \to \emptyset
  \mid \overline{f_x\strut} \;\;\}.
\]
Consider now the following Boolean P system:
\[
  \Pi(F) = \left(X, \bigcup_{x \in X} R_x \right).
\]
We claim that $\Pi(F)$ faithfully simulates $F$.

\begin{theorem}\label{thm:bp-bn}
  Take a Boolean network $F$ and a Boolean mode $M$.  Then the Boolean
  P system $\Pi(F)$ constructed as above and working under the
  quasimode
  $\tilde M = \left\{\bigcup_{x \in m} R_x \mid m \in M\right\}$
  faithfully simulates $F$: for any evolution of $F$ under $M$ there
  exists an equivalent evolution of $\Pi(F)$ under $\tilde M$, and
  conversely, for any evolution of $\Pi(F)$ under $\tilde M$ there
  exists an equivalent evolution of $F$ under~$M$.
\end{theorem}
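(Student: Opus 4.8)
The plan is to establish the two directions of the equivalence by exhibiting an explicit step-by-step correspondence between configurations of $F$ and configurations of $\Pi(F)$, using the identification of a Boolean state $s : X \to \{0,1\}$ with the subset $s = \{x \in X \mid s(x) = 1\} \subseteq X$. The key observation, which I would state and prove first as a lemma, is a single-step claim: for any state $s$ and any subset of variables $m \in M$, updating exactly the variables in $m$ according to $F$ yields the state $s'$ with $s'(x) = f_x(s)$ for $x \in m$ and $s'(x) = s(x)$ otherwise; on the P system side, the set of rules $R' = \bigcup_{x \in m} R_x$ has the property that, when applied to the configuration $s$, it produces exactly the configuration $s'$. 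To see this, note that for each $x \in m$ exactly one of the two rules in $R_x$ is applicable to $s$ (the rule $\emptyset \to \{x\} \mid f_x$ iff $f_x(s) = 1$, and $\{x\} \to \emptyset \mid \overline{f_x}$ iff $f_x(s) = 0$), and these are genuinely separately applicable since the guards $f_x$ and $\overline{f_x}$ depend only on the current configuration $s$, not on each other's effects; then the Boolean P system update rule $(W \setminus \bigcup_i A_i) \cup \bigcup_i B_i$ adds precisely those $x \in m$ with $f_x(s) = 1$ and removes precisely those $x \in m$ with $f_x(s) = 0$, leaving variables outside $m$ untouched. Hence the resulting set is exactly $s'$.

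Next I would check that the quasimode bookkeeping matches up. Fix a configuration $W = s$. By definition the mode derived from $\tilde M$ gives $M_{\tilde M}(W) = \tilde M \cap \mathit{Appl}(\Pi(F), W)$. Since every $R' = \bigcup_{x\in m} R_x$ is applicable to every configuration (for each $x$, exactly one rule of $R_x$ fires, as argued above), we get $\tilde M \subseteq \mathit{Appl}(\Pi(F), W)$ for every $W$, so $M_{\tilde M}(W) = \tilde M$ — the intersection is vacuous. Thus the sets of rules the P system may apply in configuration $s$ are exactly $\{\bigcup_{x \in m} R_x \mid m \in M\}$, in bijection (via $m \mapsto \bigcup_{x \in m} R_x$) with the subsets $m \in M$ that the Boolean mode allows. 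Combined with the single-step lemma, this gives: $s \to s'$ is a valid step of $F$ under $M$ (witnessed by $m \in M$) if and only if $s \overset{R'}{\longrightarrow} s'$ is a valid step of $\Pi(F)$ under $\tilde M$ (witnessed by $R' = \bigcup_{x \in m} R_x$).

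Finally I would lift the single-step equivalence to whole evolutions by a straightforward induction on the length of the sequence. Given an evolution $(s_i)_{0 \le i \le n}$ of $F$ under $M$, the same sequence of subsets $(s_i)_{0 \le i \le n}$ is an evolution of $\Pi(F)$ under $\tilde M$, each step being justified by the step lemma; conversely any evolution $(W_i)_{0 \le i \le n}$ of $\Pi(F)$ under $\tilde M$ consists of subsets of $X$ (the alphabet of $\Pi(F)$ is exactly $X$, and no rule introduces foreign symbols), hence corresponds to a sequence of Boolean states, and is an evolution of $F$ under $M$. I expect the only real subtlety — the ``main obstacle'' — to be pinning down precisely what ``equivalent evolution'' should mean and making sure the correspondence is tight: in particular one must be careful that a single $m \in M$ might in principle collapse to the same rule set as a different $m' \in M$ if they induce the same effect, and that the ``no halting condition'' mismatch noted in the earlier remark does not interfere (here it does not, because every rule set in $\tilde M$ is always applicable, so $\Pi(F)$ under $\tilde M$ never halts, mirroring the fact that $F$ always has a successor). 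Modulo fixing these definitional points, the argument is the routine induction sketched above.
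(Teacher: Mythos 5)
Your overall strategy---a bidirectional, transition-by-transition correspondence between states of $F$ and configurations of $\Pi(F)$, lifted to whole evolutions by an implicit induction---is the same route the paper takes (the paper states it even more tersely). However, your ``quasimode bookkeeping'' paragraph contains a concrete false step. You claim that for each $x \in m$ exactly one of the two rules of $R_x$ is applicable to $s$, and deduce that every advised set $R' = \bigcup_{x\in m} R_x$ belongs to $\mathit{Appl}(\Pi(F), W)$ for every $W$, hence $M_{\tilde M}(W) = \tilde M$. Neither claim holds. First, the erasing rule $\{x\} \to \emptyset \mid \overline{f_x\strut}$ requires its left-hand side to be present, so it is applicable only when $x \in W$ \emph{and} $f_x(W) = 0$; in the case $f_x(W)=0$ and $x \notin W$ no rule of $R_x$ is applicable at all (harmless for the simulation, since $x$ must simply remain absent, but it breaks your stated dichotomy). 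Second, the two rules of $R_x$ have complementary guards $f_x$ and $\overline{f_x\strut}$, so they are never simultaneously applicable; consequently the full set $\bigcup_{x\in m} R_x$ is never a set of separately applicable rules (for $m \neq \emptyset$), and $\tilde M \subseteq \mathit{Appl}(\Pi(F), W)$ is simply false. The role of the quasimode is exactly to absorb this: the advised set need not be applicable in its entirety, and only its applicable members fire (this is how the paper uses quasimodes throughout, e.g.\ in the LBA simulation where $\tilde M = \{R\}$ although at most one rule of $R$ is ever applicable), even though the paper's formal definition of the derived mode is itself loose on this point.

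The repair is local. In a configuration $W$, the applicable rules among $\bigcup_{x \in m} R_x$ are precisely the introducing rule for each $x \in m$ with $f_x(W)=1$ and the erasing rule for each $x \in m \cap W$ with $f_x(W)=0$; applying exactly these rules yields $(W$ with the variables of $m$ updated according to $F)$, and variables of $m$ that are absent and should stay absent need no rule. With this corrected case analysis your one-step lemma is true, the converse direction and the induction go through unchanged, and the argument coincides with the paper's. But as written, the assertion that every $R' \in \tilde M$ is applicable to every configuration, and the conclusion $M_{\tilde M}(W) = \tilde M$ drawn from it, are wrong and must be replaced by the applicability-filtering argument above.
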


\begin{proof}
  Take two arbitrary states $s$ and $s'$ of $F$ such that $s'$ is
  reachable from $s$ by the update prescribed by an element $m \in M$.
  Consider now the subsets of variables $W, W' \subseteq X$ defined by
  $s$ and $s'$ taken as the respective indicator functions.
  It follows from the construction of $\tilde M$ that it contains an
  element $\tilde m$ including the update rules for all the variables
  of $m$: $\tilde m = \bigcup_{x \in m} R_x$.  Therefore, $\Pi(F)$ can
  derive $W'$ from $W$ under the quasimode $\tilde M$.

  Conversely, consider two subsets of variables $W, W' \subseteq X$
  such that $\Pi(F)$ can derive $W'$ from $W$ under the update
  prescribed by an element $\tilde m \in \tilde M$.  By construction
  of $\tilde M$, there exists a subset $m \subseteq X$ such that
  $\tilde m = \bigcup_{x \in m} R_x$.  Take now the indicator
  functions $s, s' : X \to \{0, 1\}$ describing $W$ and $W'$
  respectively.  Then $F$ can derive $s'$ from $s$ by updating the
  variables in $m$.

  We conclude that the transitions of $\Pi(F)$ exactly correspond to
  the transitions of $F$, which proves the statement of the
  theorem. \qed
\end{proof}

\begin{example}
  Consider the Boolean network $F_U$ from Example~\ref{ex:bool}:
  \[
    \begin{array}{lcl}
      f_x & = & \bar x \wedge y, \\
      f_y & = & x \wedge \bar y. \\
    \end{array}
  \]
  This Boolean network can be translated to the Boolean P system
  $\Pi = (V, R)$ with $V = \{x, y\}$ and the following rules:
  \[
    \begin{array}{lcl}
      R &=& R_x \cup R_y, \\[1mm]
      R_x &=& \{\; \emptyset \to \{x\} \mid \bar x \wedge y, \;
              \{x\} \to \emptyset \mid \overline{\bar x \wedge y\strut} \;\}, \\[1.2mm]
      R_y &=& \{\; \emptyset \to \{y\} \mid x \wedge \bar y, \;
                 \{y\} \to \emptyset \mid \overline{x \wedge \bar y\strut}\; \}.
    \end{array}
  \]
  The first rule in $R_x$ ensures that $x$ is introduced whenever the
  current state satisfies $\bar x \wedge y = f_x$, and the second rule
  ensures that $x$ is removed whenever the current state does not
  satisfy $\bar x \wedge y$.  Similarly, the rules in $R_y$ introduce
  or remove $y$ depending on whether the current state satisfies
  $f_y$.

  To simulate $F_U$ under the Boolean synchronous mode, $\Pi$ should
  run under the quasimode $\tilde M_\textit{syn} = \{R\}$, i.e. the
  quasimode allowing all rules in $R$ to be applied at all times.
  To simulate $F_U$ under the Boolean asynchronous mode, $\Pi$ should
  run under the quasimode $\tilde M_\textit{asyn} = \{R_x, R_y\}$,
  i.e. the quasimode allowing the application of either both rules in
  $R_x$, or both rules in $R_y$, but not all 4 rules at a time.  \qed
\end{example}

\begin{remark}
  Incidentally, Boolean P systems also capture reaction systems (see
  also Remarks~\ref{rem:boolp-rs} and~\ref{rem:rs-control}).  Indeed,
  consider a reaction $a = (R_a, I_a, P_a)$ with the reactants $R_a$,
  inhibitors $I_a$, and products $P_a$.  It can be directly simulated
  by the Boolean P system rule $\emptyset \to P_a \mid \varphi_a$,
  where
  $\varphi_a = \bigwedge_{x \in R_a} x \wedge \bigwedge_{y \in I_a}
  \bar y$.  The degradation of the species in reaction systems can be
  simulated by adding a rule $\{x\} \to \emptyset \mid \mathbf{1}$ for
  every species $x$, where $\mathbf{1}$ is the Boolean tautology. \qed
\end{remark}

\section{Composition of Boolean P Systems}
\label{sec:composition}

In this section, we define the composition of Boolean P systems in the
spirit of automata theory.  Consider two Boolean P systems
$\Pi_1 = (V_1, R_1)$ and $\Pi_2 = (V_2, R_2)$.  We will call the union
of $\Pi_1$ and $\Pi_2$ the Boolean P system
$\Pi_1 \cup \Pi_2 = (V_1 \cup V_2, R_1 \cup R_2)$.  Note that the
alphabets $V_1$ and $V_2$, as well as the rules $R_1$ and $R_2$ are
not necessarily disjoint.

To talk about the evolution of $\Pi_1 \cup \Pi_2$, we first define
a variant of Cartesian product of two sets of sets $A$ and $B$:
$A \dottimes B = \{a \cup b \mid a \in A, b \in B\}$.  We remark now
that
\[
  \forall W\subseteq V_1 \cup V_2 : \mathit{Appl}(\Pi_1 \cup \Pi_2, W)
  = \mathit{Appl}(\Pi_1, W) \dottimes \mathit{Appl}(\Pi_2, W).
\]
Indeed, since the rules of Boolean P systems do not compete for
resources among them, the applicability of any individual rule is
independent of the applicability of the other rules.  Therefore, the
applicability of a set of rules of $\Pi_1$ to a configuration $W$ is
independent of the applicability of a set of rules of $\Pi_2$ to $W$.

For a mode $M_1$ of $\Pi_1$ and a mode $M_2$ of $\Pi_2$, we define
their product as follows:
\[
  (M_1 \times M_2)(W) = M_1(W) \dottimes M_2(W).
\]
The union of Boolean P systems $\Pi_1 \cup \Pi_2$ together with the
product mode $M_1 \times M_2$ implement parallel composition of the
two P systems.  In particular, if the alphabets of $\Pi_1$ and $\Pi_2$
are disjoint, the projection of any evolution of $\Pi_1 \cup \Pi_2$
under the mode $M_1 \times M_2$ on the alphabet $V_1$ will yield
a valid evolution of $\Pi_1$ under $M_1$ (modulo some repeated
states), while the projection on $V_2$ will yield a valid evolution of
$\Pi_2$ under the mode $M_2$ (modulo some repeated states).  Note that
this property may not be true if the two alphabets intersect
$V_1 \cap V_2 \neq \emptyset$.

Quasimodes fit naturally with the composition of modes, as the
following lemma shows.

\begin{lemma}
  If the mode $M_1$ can be derived from the quasimode $\tilde M_1$ and
  $M_2$ from the quasimode $\tilde M_2$, then the product mode
  $M_1 \times M_2$ can be derived from
  $\tilde M_1 \dottimes \tilde M_2$:

  \begin{center}
    \begin{tikzpicture}[node distance=7mm and 6mm]
      \node (M1-x-M2) {$M_1 \times M_2$};
      \node[above=of M1-x-M2] (tM1-x-tM2) {$\tilde M_1 \dottimes \tilde M_2$};
      \node[yshift=-3mm,base left=of tM1-x-tM2] (tM1) {$\tilde M_1$};
      \node[yshift=-3mm,base right=of tM1-x-tM2] (tM2) {$\tilde M_2$};
      \node[yshift=-1mm,below=of tM1] (M1) {$M_1$};
      \node[yshift=-1mm,below=of tM2] (M2) {$M_2$};

      \draw[->,densely dashed] (tM1) -- (M1);
      \draw[->,densely dashed] (tM2) -- (M2);
      \draw[->,densely dashed] (tM1-x-tM2) -- (M1-x-M2);

      \draw[<-] (M1) -- (M1-x-M2);
      \draw[<-] (M2) -- (M1-x-M2);
      \draw[<-] (tM1) -- (tM1-x-tM2);
      \draw[<-] (tM2) -- (tM1-x-tM2);
    \end{tikzpicture}
  \end{center}
  where a dashed arrow \tikz\draw[->,densely dashed] (0,0) -- (5mm,0);
  from a quasimode to a mode indicates that the mode is derived from
  the quasimode, and the arrows \tikz\draw[->] (0,0) -- (5mm,0); are
  the respective projections.
\end{lemma}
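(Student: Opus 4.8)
The plan is to prove the identity $(M_1 \times M_2)(W) = (\tilde M_1 \dottimes \tilde M_2) \cap \mathit{Appl}(\Pi_1 \cup \Pi_2, W)$ directly, by chasing the definitions on both sides and unwinding the $\dottimes$ operation. By the definition of the product mode, the left-hand side equals $M_1(W) \dottimes M_2(W)$, and since $M_i$ is derived from $\tilde M_i$, this is $\bigl(\tilde M_1 \cap \mathit{Appl}(\Pi_1, W)\bigr) \dottimes \bigl(\tilde M_2 \cap \mathit{Appl}(\Pi_2, W)\bigr)$. On the right-hand side, I would use the already-established identity $\mathit{Appl}(\Pi_1 \cup \Pi_2, W) = \mathit{Appl}(\Pi_1, W) \dottimes \mathit{Appl}(\Pi_2, W)$, so the goal reduces to the purely set-theoretic claim
\[
  (\tilde M_1 \cap A_1) \dottimes (\tilde M_2 \cap A_2)
  = (\tilde M_1 \dottimes \tilde M_2) \cap (A_1 \dottimes A_2),
\]
where $A_i = \mathit{Appl}(\Pi_i, W)$.

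The inclusion from left to right is immediate: if $a \in \tilde M_1 \cap A_1$ and $b \in \tilde M_2 \cap A_2$, then $a \cup b$ lies in both $\tilde M_1 \dottimes \tilde M_2$ and $A_1 \dottimes A_2$. The reverse inclusion is the one that requires care, because $\dottimes$ is not injective — a set $c = a \cup b$ may admit several decompositions — so from $c = a_1 \cup b_1$ with $a_1 \in \tilde M_1$, $b_1 \in \tilde M_2$ and $c = a_2 \cup b_2$ with $a_2 \in A_1$, $b_2 \in A_2$ one cannot directly conclude that a single decomposition works for all four. Here is where I expect the main obstacle, and here is where the structure of Boolean P systems saves us: a ``set of rules'' in $R_1 \cup R_2$ decomposes canonically as (its intersection with $R_1$) and (its intersection with $R_2$), and — crucially — applicability of a set of rules to $W$ is equivalent to applicability of each of its members to $W$ (there is no competition for resources), so $A_1$ and $A_2$ are downward closed and closed under the relevant restrictions. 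Concretely, given $c \in (\tilde M_1 \dottimes \tilde M_2) \cap (A_1 \dottimes A_2)$, I would take the witnesses $a_1 \in \tilde M_1$, $b_1 \in \tilde M_2$ with $c = a_1 \cup b_1$; then every rule in $a_1 \subseteq c$ is applicable to $W$ (since $c \in A_1 \dottimes A_2$ means every rule in $c$ is applicable to $W$), hence $a_1 \in A_1$, and likewise $b_1 \in A_2$. This yields a single decomposition $c = a_1 \cup b_1$ with $a_1 \in \tilde M_1 \cap A_1$ and $b_1 \in \tilde M_2 \cap A_2$, establishing the reverse inclusion.

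To make the argument fully rigorous I would first record as a small observation that for any Boolean P system $\Pi = (V,R)$ and any $W \subseteq V$, a set $S \subseteq R$ belongs to $\mathit{Appl}(\Pi, W)$ if and only if every rule $r \in S$ is individually applicable to $W$; this is exactly the ``no competition'' remark made when $\mathit{Appl}$ was introduced, and it is what makes $\mathit{Appl}(\Pi, W)$ closed under taking subsets and under the splitting $S \mapsto (S \cap R_1, S \cap R_2)$. Once that is in place, the two inclusions above close the proof. The commuting-diagram picture in the statement is then just a visual summary: the two downward dashed arrows on the sides are the definitions of $M_1, M_2$ from $\tilde M_1, \tilde M_2$; the middle dashed arrow is the content of the lemma; and the four solid projection arrows commute because $\dottimes$ on modes was defined pointwise precisely to mirror $\dottimes$ on quasimodes.
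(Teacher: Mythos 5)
Your proof is correct and follows essentially the same route as the paper's: both establish the pointwise identity $(M_1 \times M_2)(W) = (\tilde M_1 \dottimes \tilde M_2) \cap \mathit{Appl}(\Pi_1 \cup \Pi_2, W)$ by proving two inclusions, relying on the definition of $\dottimes$ together with the no-competition fact that a set of rules is applicable exactly when each of its rules is individually applicable. Your explicit treatment of the non-uniqueness of $\dottimes$-decompositions simply makes rigorous a step the paper states more briefly (that the witnesses $m_1 \in \tilde M_1$, $m_2 \in \tilde M_2$ are themselves applicable sets), so there is no substantive difference in approach.
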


\begin{proof}
  Consider the mode $M_{12}$ derived from $\tilde M_1 \dottimes \tilde
  M_2$:
  \[
    M_{12}(W) = \left(\tilde M_1 \dottimes \tilde M_2\right) \cap
    \mathit{Appl}(\Pi, W).
  \]
  Pick an arbitrary element $m_{12} \in M_{12}(W)$ and remark that it
  can be seen as a union $m = m_1 \cup m_2$ where $m_1$ is a subset of
  applicable rules with the property that $m_1 \in \tilde M_1$, and
  $m_2$ is a subset of applicable rules with the property that
  $m_2 \in \tilde M_2$.  Thus
  $m_1 \in \tilde M_1 \cap \mathit{Appl}(\Pi,W)$ and
  $m_2 \in \tilde M_2 \cap \mathit{Appl}(\Pi,W)$, implying that
  \[
    M_{12}(W) \subseteq
    \left(\tilde M_1 \cap \mathit{Appl}(\Pi,W)\right)
    \dottimes
    \left(\tilde M_2 \cap \mathit{Appl}(\Pi,W)\right).
  \]

  Consider on the other hand an arbitrary
  $m_1 \in \tilde M_1 \cap \mathit{Appl}(\Pi,W)$ and an arbitrary
  $m_2 \in \tilde M_2 \cap \mathit{Appl}(\Pi,W)$.  By definition of
  $\dottimes$,
  $m_1 \cup m_2 \in \tilde M_1 \dottimes \tilde M_2$.  Remark that
  every rule in $m_1$ and $m_2$ is individually applicable, meaning
  that they are also applicable together and that
  $m_1 \cup m_2 \in \mathit{Appl}(\Pi, W)$.  Combining this
  observation with the reasoning from the previous paragraph we
  finally derive:
  \[
    M_{12}(W) =
    \left(\tilde M_1 \cap \mathit{Appl}(\Pi,W)\right)
    \dottimes
    \left(\tilde M_2 \cap \mathit{Appl}(\Pi,W)\right)
    = M_1(W) \dottimes M_2(W),
  \]
  which implies that $M_{12} = M_1 \times M_2$ and concludes the
  proof. \qed
\end{proof}

\section{Boolean P Systems for Sequential Controllability}
\label{sec:boolp-seq}

Underlying sequential controllability of Boolean control networks
(Section~\ref{sec:seq-bcn}) is the implicit presence of a master
dynamical system emitting the control inputs to the network and
thereby driving it.  This master system is external with respect to
the controlled BCN.  The framework of Boolean P systems is
sufficiently general to capture both the master system and the
controlled BCN in a single homogeneous formalism.  In this section, we
show how to construct such Boolean P systems for dealing with
questions of controllability.

\smallskip

Any BCN $F_U : S_U \to (S_X \to S_X)$ can be written as a system of
propositional formulae over $X \cup U$.  First, note that a control
$\mu \in S_U$ can be described by the conjuction
$\bigwedge_{u \in \mu} u \wedge \bigwedge_{v \in U \setminus \mu} \bar
v$.  Now fix an $x \in X$ and consider the formula
\begin{equation}
  \bigvee_{\mu \in S_U} \mu \wedge F(\mu)_x,
  \label{eq:controlled-update-function}
\end{equation}
in which $\mu$ enumerates all the conjuctions corresponding to the
controls in $S_U$ and $F(\mu)_x$ is the propositional formula of the
update function which $F(\mu)$ associates to $x$.
Using (\ref{eq:controlled-update-function}), we can translate any BCN
$F_U : S_U \to (S_X \to S_X)$ into the system of Boolean functions
$F' : S_{X \cup U} \to S_X$ and use
the set $R_x$ from Section~\ref{sec:boolp-bn} to further translate the
individual components of $F'$ to pairs of Boolean P system rules.
Denote $\Pi = (X \cup U, R)$ the Boolean P system whose set of rules
is precisely the union of the sets $R_x$ mentioned above, for
$x\in X$. Finally, construct the Boolean P~system $\Pi_U(U, R_U)$ with
the following rules whose guards are always true:
\[
  \renewcommand{\arraystretch}{1.3}
  \begin{array}{lcl}
    R_U   & = & R_U^0 \cup R_U^1,                                                \\[1mm]
    R_U^0 & = & \{\;\,\{u\} \to \emptyset \mid \mathbf{1} \;\, \mid u \in U\,\}, \\
    R_U^1 & = & \{\;\,\emptyset \to \{u\} \mid \mathbf{1} \;\, \mid u \in U\,\}. \\
  \end{array}
\]

Suppose now that the original BCN $F_U$ runs under the mode $M$, and
consider the corresponding quasimode
$\tilde M = \left\{\bigcup_{x\in m}R_x \mid m \in M\right\}$, as well
as the quasimode
\[
  \tilde M_{U} =  \{R_U^0\} \dottimes 2^{R_U^1}.
\]
Every element $m_U \in \tilde M_U$ is a union of $R_U^0$ and a subset
of $R_U^1$, meaning that $m_U$ enables all rules removing the control
inputs, and enables \emph{some} of the rules adding back
control inputs.

We claim that the Boolean P system $\Pi \cup \Pi_U$ running under the
quasimode $\tilde M \dottimes \tilde M_{U}$ faithfully simulates the
BCN $F_U$ running under the mode $M$.  The following theorem
formalizes this claim.

\begin{theorem}\label{thm:bp-seq}
  Consider a BCN $F_U$ running under the mode $M$. Then the Boolean
  P system $\Pi \cup \Pi_U$ constructed as above and running under the
  quasimode $\tilde M \dottimes \tilde M_U$ faithfully simulates
  $F_U$:
  \begin{enumerate}
    \item For any evolution of $F_U$ under $M$ there exists an
          equivalent evolution of $\Pi \cup \Pi_U$ under
          $\tilde M \dottimes \tilde M_U$.
    \item For any evolution of $\Pi \cup \Pi_U$ under
          $\tilde M \dottimes \tilde M_U$ there exists an equivalent
          evolution of $F_U$ under $M$.
  \end{enumerate}
\end{theorem}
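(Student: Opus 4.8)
The plan is to mirror the two-directional simulation argument of Theorem~\ref{thm:bp-bn}, but now carefully tracking the fact that the control inputs $U$ live in the same configuration as the variables $X$ and are rewritten at every step by the rules $R_U$. The key observation is a \emph{factorization} of each step of $\Pi \cup \Pi_U$: by the remark preceding the product-mode definition and by the composition lemma, the mode derived from $\tilde M \dottimes \tilde M_U$ acts on a configuration $W \subseteq X \cup U$ as the $\dottimes$-product of the mode derived from $\tilde M$ (which only touches symbols of $X$, since the rules $R_x$ only add or remove $x \in X$) and the mode derived from $\tilde M_U$ (which only touches symbols of $U$). So a single transition $W \to W'$ splits as $W \cap X \to W' \cap X$ performed by some $\tilde m = \bigcup_{x\in m} R_x$ with $m \in M$, applied \emph{in the presence of} the current control valuation $W \cap U$, simultaneously with $W \cap U \to W' \cap U$ performed by some $m_U \in \tilde M_U$.

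Next I would pin down what $m_U \in \{R_U^0\} \dottimes 2^{R_U^1}$ does: since $R_U^0$ removes every control input and the chosen subset of $R_U^1$ re-adds an arbitrary subset $\mu' \subseteq U$, and since (per the definition of applying a set of Boolean P system rules) removals and additions are pooled as $(W \setminus \bigcup A_i)\cup \bigcup B_i$ with additions winning, the net effect is $W' \cap U = \mu'$ for a \emph{completely arbitrary} $\mu' \subseteq U$, independently of $W \cap U$. This is exactly the semantics of emitting a fresh control $\mu_{i+1}$ at each step of a BCN evolution, with no memory of $\mu_i$. The only subtlety is simultaneity: the $X$-rules read the guard $F'$ against the configuration \emph{before} the step, i.e. against the old control valuation $W \cap U$, not the newly emitted one — which is precisely the convention in the definition of evolution of $F_U$ under a control sequence (the control $\mu_i$ governs the trajectory $\tau_i$, and $\mu_{i+1}$ only takes over at the gluing point). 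I would state this as a one-line lemma inside the proof.

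Having established the factorization, the forward direction is routine: given an evolution of $F_U$ under $M$, which is a sequence of trajectory-segments $\tau_i$ each governed by a control $\mu_i$, I build the matching $\Pi \cup \Pi_U$ evolution by, at each internal step of $\tau_i$, firing $\tilde m = \bigcup_{x \in m}R_x$ for the appropriate $m \in M$ together with the $m_U \in \tilde M_U$ that re-emits $\mu_i$ (keeping the control unchanged inside a segment), and at each gluing point firing the $m_U$ that re-emits $\mu_{i+1}$. By Theorem~\ref{thm:bp-bn} applied to the Boolean network $F'(\,\cdot\,;\mu_i) = F_U(\mu_i)$ — here I use that $F'$ restricted to the fixed control valuation $\mu_i$ is, by construction~\eqref{eq:controlled-update-function}, exactly $F_U(\mu_i)$ — each $X$-substep is faithful. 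The converse direction runs the same correspondence backwards: given an evolution of $\Pi \cup \Pi_U$, read off from each step the $U$-part to recover the sequence of emitted controls (grouping maximal runs with equal control value into segments, or simply allowing length-1 segments), read off the $X$-part, and invoke Theorem~\ref{thm:bp-bn} to certify each segment as a valid trajectory of $F_U(\mu_i)$; the gluing condition holds automatically because consecutive segments share the boundary configuration.

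The main obstacle — really the only place where care is needed — is the bookkeeping around the shared alphabet and the timing of the control update. Because $V_1 \cap V_2 = \emptyset$ would be false only if $X \cap U \neq \emptyset$, but the BCN definition stipulates $U \cap X = \emptyset$, the clean projection property stated after the product-mode definition does apply, so the factorization is legitimate; I must simply make explicit that the $X$-rules' guards are evaluated against the pre-step control valuation, matching the BCN convention, and that the arbitrariness of $\tilde M_U$'s re-emission corresponds exactly to the freedom in choosing the next control $\mu_{i+1}$ in a control sequence. Once that dictionary is spelled out, the rest is a direct appeal to Theorem~\ref{thm:bp-bn} and the composition lemma, with no computation.
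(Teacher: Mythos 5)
Your proposal is correct and follows essentially the same route as the paper's proof: a step-by-step correspondence in which the $X$-part of each transition of $\Pi \cup \Pi_U$ simulates one update of $F_U(\mu)$ for the control $\mu$ read off the \emph{pre-step} configuration (via Theorem~\ref{thm:bp-bn} and the construction of $F'$), while the $\{R_U^0\} \dottimes 2^{R_U^1}$ component erases and arbitrarily re-emits the control, so that the next control is installed one step before the segment it governs---exactly the paper's ``anticipation'' of $\mu_{i+1}$ at the gluing states---and the converse direction is read back symmetrically. One slip worth noting: the alphabets of $\Pi$ and $\Pi_U$ are \emph{not} disjoint, since $\Pi = (X \cup U, R)$ (its guards read the control symbols), so the disjoint-alphabet projection property you invoke does not apply here; this is harmless, because the per-step factorization you actually use already follows from the composition lemma together with the fact that the left- and right-hand sides of the rules of $\Pi$ lie in $X$ while those of $\Pi_U$ lie in $U$, which you also state.
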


\begin{proof}
  \emph{(1)}\hspace{1ex} Consider two states $s, s' \in S_X$ and
  a control $\mu \in S_U$ such that $F_U(\mu)$ reaches $s'$ from $s$
  in one step.  Take $W, W' \subseteq X$ and $W_U \subseteq U$ by
  respectively taking $s$, $s'$, and $\mu$ as indicator functions.
  Then, as in Theorem \ref{thm:bp-bn}, there exists an
  $\tilde m \in \tilde M$ such that $\Pi$ reaches $W' \cup W_U$ from
  $W \cup W_U$ in one step.  This follows directly from the
  construction of the rules in $\Pi$ and from the fact that $W_U$
  contains exactly the symbols corresponding to the control inputs
  activated by~$\mu$.

  Take now $\tilde M \dottimes \tilde M_U$ and remark that its
  elements are of the form $\tilde m \cup \tilde m_U$, where
  $\tilde m_U = \tilde m_U^1 \cup R_U^0$ and
  $\tilde m_U^1 \subseteq R_U^1$.  Under such an element
  $\tilde m \cup \tilde m_U$, $\Pi \cup \Pi_U$ reaches a state
  $W' \cup W_U'$ from $W \cup W_U$ in one step, where $W_U'$ contains
  the symbols from $U$ introduced by the rules selected by
  $\tilde m_U^1$.  Further note that all elements of $W_U$ are always
  erased by the rules $R_U^0$, but may be immediately reintroduced by
  $m_U^1$.

  Suppose now that $F_U(\mu)$ reaches $s'$ from $s$ in multiple steps.
  Then $\Pi$ reaches $W' \cup W_U$ from $W \cup W_U$ in the same
  number of steps, provided that $\tilde m_U^1$ is always chosen such
  that the rules it selects reintroduce exactly the subset $W_U$.
  If $F_U$ reaches $s'$ from $s$ in multiple steps, but the control
  evolves as well, it suffices to choose $\tilde m_U^1$ such that it
  introduces the correct control inputs before each step.  Finally,
  the control $\mu_0$ applied in the first step of a trajectory of
  $F_U$ must be introduced by setting the starting state of
  $\Pi \cup \Pi_U$ to $W \cup W_U^0$, where $W$ corresponds to the
  initial state of the trajectory of $F_U$.

  \medskip

  \noindent
  \emph{(2)}\hspace{1ex} The converse construction is symmetric.
  A state $W \cup W_U$ of $\Pi \cup \Pi_U$ is translated into the
  state $s \in S_X$ and the control $\mu \in S_U$ corresponding to
  $W_U$.  A step of $\Pi \cup \Pi_U$ under $\tilde m \cup \tilde m_U$
  is translated to applying $\mu$ to $F_U$ and updating the variables
  corresponding to the rules activated by $\tilde m$.  In this way,
  for any trajectory of $\Pi \cup \Pi_U$ under the quasimode
  $\tilde M \dottimes \tilde M_U$ there exists a corresponding
  trajectory in the controlled dynamics of $F_U$. \qed
\end{proof}

We now give an extensive example showing how the composite system
$\Pi \cup \Pi_U$ from the proof above is constructed for a concrete
BCN, and detailing how $\Pi \cup \Pi_U$ simulates its sequentially
controlled trajectories.

\begin{example}\label{ex:pi-pi-u}
  Consider the BCN $F_U$ from Example~\ref{ex:bcn} with the following
  update functions modified to include the control inputs:
  \[
    \renewcommand{\arraystretch}{1.3}
    \begin{array}{lcl}
      f'_x & = & (\bar x \wedge y) \wedge u_x^0 \vee \overline{u_x^1}, \\
      f'_y & = & (x \wedge \bar y) \wedge u_y^0 \vee \overline{u_y^1}, \\
    \end{array}
  \]
  and recall that $X = \{x, y\}$ and
  $U = \{u_x^0, u_x^1, u_y^0, u_y^1\}$.  Since the control inputs are
  already explicitly present in the propositional formulae, we can put
  these together directly to obtain $F' : S_{X \cup U} \to S_X$,
  bypassing equation~\ref{eq:controlled-update-function}.

  \paragraph{Construction of $\Pi \cup \Pi_U.$}
  First construct the Boolean P system $\Pi = (X \cup U, R)$ with the
  following rules:
  \[
    \begin{array}{lcl}
      R &=& R_x \cup R_y, \\[1mm]
      R_x &=& \{\; \emptyset \to \{x\} \mid f'_x, \;
              \{x\} \to \emptyset \mid \overline{f'_x} \;\}, \\[1.2mm]
      R_y &=& \{\; \emptyset \to \{y\} \mid f'_y, \;
                 \{y\} \to \emptyset \mid \overline{f'_y} \;\}.
    \end{array}
  \]
  Now, define $\Pi_U = (U, R_U)$ with the following rules:
  \[
    \renewcommand{\arraystretch}{1.3}
    \begin{array}{lcl}
      R_U   & = & R_U^0 \cup R_U^1, \\
      R_U^0 & = & \{\;\{u_x^0\} \to \emptyset \mid \mathbf{1}, \;
                  \{u_x^1\} \to \emptyset \mid \mathbf{1}, \;
                  \{u_y^0\} \to \emptyset \mid \mathbf{1}, \;
                  \{u_y^1\} \to \emptyset \mid \mathbf{1} \;\}, \\
      R_U^1 & = & \{\;\emptyset \to \{u_x^0\} \mid \mathbf{1}, \;
                  \emptyset \to \{u_x^1\} \mid \mathbf{1}, \;
                  \emptyset \to \{u_y^0\} \mid \mathbf{1}, \;
                  \emptyset \to \{u_y^1\} \mid \mathbf{1} \;\}. \\
    \end{array}
  \]

  Suppose that $F_U$ runs under the synchronous mode.  This is
  translated into the quasimode $\tilde M_{syn} = \{R\}$ for the
  Boolean P system $\Pi$.  The quasimode $\tilde M_U$ for $\Pi_U$ will
  be as follows:
  \[
    \tilde M_U = \{ R_U^0 \cup \tilde m^1_U \mid \tilde m^1_U \subseteq R_U^1 \}.
  \]
  Finally, the composite P system $\Pi \cup \Pi_U$ will run under the
  following quasimode:
  \[
    \tilde M \dottimes \tilde M_U = \{ R \cup R_U^0 \cup \tilde m^1_U \mid \tilde m^1_U \subseteq R_U^1 \}.
  \]

  \paragraph{Simulation of sequential control.}
  The 3 controls introduced in Example~\ref{ex:bcn} can be written as
  sets in the following way:
  \[
    \renewcommand{\arraystretch}{1.4}
    \begin{array}{lcl}
      \mu_1 &=& \{u_x^0, u_x^1, u_y^0, u_y^1\}, \\
      \mu_2 &=& \{\phantom{u_x^0,}\, u_x^1, u_y^0, u_y^1\}, \\
      \mu_3 &=& \{u_x^0, u_x^1, u_y^0 \phantom{, u_y^1}\}. \\
    \end{array}
  \]

  The trajectory $\tau_1 : 01 \to 10 \to 01$ of $F_U(\mu_1)$ will be
  simulated as the following evolution of $\Pi \cup \Pi_U$:
  \[
    \{y\} \cup \mu_1 \to \{x\} \cup \mu_1 \to \{y\} \cup \mu_1,
  \]
  where the rules to be applied in each transition are picked from the
  set $R \cup R_U^0 \cup R_U^1 \in \tilde M \dottimes \tilde M_U$.
  Note how $\mu_1$ is explicitly included as a set of symbols in the
  configuration of the composite Boolean P system $\Pi \cup \Pi_U$.

  Similary, the trajectory $\tau_2 : 01 \to 00 \to 00$ of $F_U(\mu_2)$
  will be simulated as follows:
  \[
    \{y\} \cup \mu_2 \to \emptyset \cup \mu_2 \to \emptyset \cup \mu_2,
  \]
  where the rules to be applied in each transition are picked from the
  set
  $R \cup R_U^0 \cup \{\, \emptyset \to \{u\} \mid \mathbf{1} \, \mid u \in \mu_2\} \in
  \tilde M \dottimes \tilde M_U$.  Note how all symbols corresponding
  to control inputs are removed at every step, and then specifically
  the control inputs from $\mu_2$ are reintroduced.

  Finally, the trajectory $\tau_3 : 00 \to 01 \to 11$ of $F_U(\mu_3)$
  will be simulated as follows by $\Pi \cup \Pi_U$:
  \[
    \emptyset \cup \mu_3 \to \{y\} \cup \mu_3 \to \{x,y\} \cup \mu_3.
  \]

  To simulate the final trajectory under the control sequence
  $\mu_{[3]} = (\mu_1, \mu_2, \mu_3)$, we glue together the final and
  the initial states of the above simulations, always anticipating the
  control from the subsequent simulation:
  \[
    \{y\} \cup \mu_1 \to \{x\} \cup \mu_1
    \to
    \underline{\{y\} \cup \mu_2} \to \emptyset \cup \mu_2
    \to
    \underline{\emptyset \cup \mu_3} \to \{y\} \cup \mu_3 \to \{x,y\} \cup \mu_3.
  \]
  Underlined elements are the states in which the control inputs
  change.  Thus, the transition
  $\{x\} \cup \mu_1 \to \underline{\{y\} \cup \mu_2}$ for example is
  governed by the set of rules
  $R \cup R_U^0 \cup \{\, \emptyset \to \{u\} \mid \mathbf{1} \, \mid u \in \mu_2\} \in
  \tilde M \dottimes \tilde M_U$ already, instead of
  $R \cup R_U^0 \cup R_U^1$ which was used in the first step.  \qed
\end{example}

The component $\Pi_U$ in the composite P system of
Theorem~\ref{thm:bp-seq} and Example~\ref{ex:pi-pi-u}
is an explicit implementation of the master
dynamical system driving the evolution of the controlled system $\Pi$.
The setting of this theorem captures the situation in which the
control can change at any moment, but $\Pi_U$ can be designed to
implement other kinds of control sequences.  We give the construction
ideas for the kinds of sequences introduced in~\cite{PardoID21}:
\begin{itemize}
\item \emph{Total Control Sequence (TCS):} all controllable variables
  are controlled at all times.

  \smallskip

  The quasimode of $\Pi_U$ will be correspondingly defined to always
  freeze the controlled variables:
  $\tilde M_U = \{R_U^0\} \dottimes 2^{P_U^1}$, where
  $P_U^1 \subseteq R_U^1$ with the property that for every $x_i \in X$
  every set $p \in P_U^1$ either introduces $u_i^0$ or $u_i^1$, but
  not both.

  \medskip

\item \emph{Abiding Control Sequence (ACS):} once controlled,
  a variable stays controlled forever, but the value to which it is
  controlled may change.

  \smallskip

  The rules of $\Pi_U$ will be constructed to never erase the control
  symbols which have already been introduced, but will be allowed to
  change the value to which the corresponding controlled variable will
  be frozen: $R_U = R_U^1 \cup P_U$, with the new set of rules defined
  as follows:
  \[
    P_U = \left\{\;\{u_i^a\} \to \{u_i^b\} \mid \mathbf{1} \, \mid x_i
      \in X, \, a, b \in \{0,1\}\right\}.
  \]
  $\Pi_U$ will able to rewrite some of the control symbols, or to
  introduce new control symbols: $\tilde M_U = 2^{R_U}$.
\end{itemize}

\section{Reachability in Boolean P Systems}
\label{sec:boolp-reach}

In this section we focus on reachability in Boolean P systems, which
we define in the following way: given a Boolean P system $\Pi$, a mode
$M$ (or a quasimode $\tilde{M}$), a set of starting states $S_\alpha$
and a set of target states $S_\omega$, decide whether an evolution of
$\Pi$ exists under the mode $M$ (or the quasimode $\tilde{M}$) driving
it from each state in $S_\alpha$ to some state in $S_\omega$.  We refer
to such a decision problem by the 4-tuple
$(\Pi, M^\dagger, S_\alpha, S_\omega)$, where $\mathcal M^\dagger$ may be
a mode or a quasimode.  In the rest of the paper, we will mainly deal
with reachability under quasimodes.

\begin{remark}
  Unlike the CoFaSe problem in which the synchronous mode is
  implicitly assumed, we explicitly include here the mode or the
  quasimode into the reachability problem.  Indeed, the size of the
  quasimode may be as much as exponential in the number of symbols,
  while the complexity of a mode may be even bigger, since it depends
  on the current configuration.  Furthermore, the mode choice impacts
  the answer of the problem.  For example the problem under the
  quasimode $\tilde{M} = \emptyset$ has a solution if and only if
  $S_\alpha \subseteq S_\omega$.
\end{remark}

In this section we will show that the reachability problem for Boolean
P~systems is \PSPACE-complete.  We start by showing that this
reachability problem is at least as hard as \LBAACCEPTANCE.

\begin{lemma}\label{lem:BPS_sim_LBA}
  $\LBAACCEPTANCE$ is reducible in polynomial time to reachability for
  Boolean P systems.
\end{lemma}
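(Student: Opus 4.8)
The plan is to simulate a deterministic LBA $\mathcal{M}$ on input $x$ by a Boolean P system whose configurations encode entire LBA configurations, and then reduce acceptance of $\mathcal{M}$ on $x$ to a reachability question. First I would fix the working tape to be exactly the cells $Z_l\,c_1 c_2 \cdots c_n\,Z_r$, where $n = |x|$, which is legitimate because the restriction on $\delta$ guarantees the head never leaves this region. I would then introduce, for each tape position $j \in \{1,\dots,n\}$ and each symbol $a \in V$, a Boolean variable $[j,a]$ meaning ``cell $j$ currently holds $a$''; for each position $j$ and each state $q \in Q$, a variable $[j,q]$ meaning ``the head is at cell $j$ in state $q$''; and possibly a few auxiliary bookkeeping symbols. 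A set $W \subseteq V$ is a \emph{legal encoding} if for every $j$ exactly one $[j,a]$ is present, and exactly one pair $[j,q]$ is present across all $j$ — these conditions are expressible as propositional guards, so illegal configurations can be recognized (and made inert, or trapped).

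Next I would encode the transition function. For each $\delta(q,a) = (p,b,D)$ and each admissible head position $j$ I would build a small set of Boolean P system rules that, when the guard ``$[j,q] \wedge [j,a]$'' holds, (i) removes $[j,q]$ and $[j,a]$, (ii) adds $[j,b]$, and (iii) adds $[j',p]$ with $j'$ equal to $j-1$, $j+1$, or $j$ according to $D \in \{L,R,S\}$ — respecting the boundary clauses on $Z_l$ and $Z_r$. Collecting all rules relevant to a single move into one set $T_{q,a,j}$, I would use the quasimode $\tilde M = \{\,T_{q,a,j} \mid q \in Q,\ a\in V,\ 1\le j\le n\,\}$. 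Because the LBA is deterministic and any legal configuration has a unique $(q,a,j)$ active, for each legal $W$ at most one $T_{q,a,j}$ is applicable, so the derived mode is deterministic and one Boolean P system step simulates exactly one LBA step. The whole construction is clearly polynomial in $|\mathcal{M}|$ and $|x|$: the number of variables is $O(n\,|V| + n\,|Q|)$ and the number of rules is $O(n\,|Q|\,|V|)$.

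Finally I would set up the reachability instance. The starting set $S_\alpha$ is the single state encoding $Z_l q_0 x Z_r$, i.e. $\{[0,Z_l],[1,q_0],[1,x_1],[2,x_2],\dots,[n,x_n],[n{+}1,Z_r]\}$ (indexing the markers as positions $0$ and $n{+}1$, or folding them into the guards). The target set $S_\omega$ is the set of all encodings of accepting configurations $Z_l q_1 \{B\}^* Z_r$, that is, all legal $W$ containing some $[j,q_1]$ with every non-marker cell holding $B$; this is again a propositional condition, so $S_\omega$ can be given succinctly. Then $\mathcal{M}$ accepts $x$ iff the unique evolution of $\Pi$ from the encoding of $Z_l q_0 x Z_r$ under $\tilde M$ reaches a state in $S_\omega$, which is exactly the reachability predicate $(\Pi, \tilde M, S_\alpha, S_\omega)$.

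I expect the main obstacle to be purely a matter of careful bookkeeping rather than conceptual difficulty: making sure the guards correctly isolate the single active $(q,a,j)$ triple, that the boundary restrictions on $\delta$ are faithfully reflected (no rule ever tries to place the head at position $0$ or $n{+}1$ in a non-marker state, and the markers are never overwritten), and that ``junk'' configurations reachable by no LBA run — e.g. if one worried about a non-legal intermediate $W$ — simply cannot arise because legality is an invariant preserved by every $T_{q,a,j}$. A secondary point to verify is that applying a set of rules in the Boolean-P-system sense (remove $\bigcup A_i$, then add $\bigcup B_i$) produces exactly the intended successor encoding even when, say, a left move makes position $j$'s new symbol $b$ coexist with the head symbol that has moved to $j-1$; since the $A_i$ and $B_i$ involved touch disjoint variables this is immediate, but it should be stated. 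Once these invariants are in place, correctness of the simulation and hence of the reduction follows, establishing that $\LBAACCEPTANCE$ reduces in polynomial time to Boolean P system reachability.
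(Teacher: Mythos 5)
Your construction is essentially the same as the paper's: positional symbols for tape contents and head state, one rule per transition and head position, a quasimode exploiting determinism so exactly one move fires per step, and a reachability instance from the encoded initial configuration to the encoded accepting configuration(s), all of polynomial size. The only cosmetic differences are that the paper uses the single quasimode $\tilde M=\{R\}$ rather than one set per $(q,a,j)$, and pins the accepting head position to a single target state instead of your (still polynomially many) blank-tape states with arbitrary head position.
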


\begin{proof}
  We will first show how to construct a Boolean P system simulating
  a given LBA, and will then evaluate the size complexity of
  the construction.

  \paragraph{Construction.}
  Let $\mathcal{M} = (Q,V,T_1, T_2, \delta, q_0, q_1, Z_l, B, Z_r)$ be
  an LBA and $x \in T_1^*$ an input word of length $n$.  We construct
  in polynomial time a Boolean P system $\Pi = (\tilde{V},R)$ that
  simulates the computation of $\mathcal{M}$ on the input $x$.
  The alphabet of $\Pi$ contains the following symbols
  \[
    \tilde{V} = \lbrace A_{v,j} \mid v \in V, \; 0 \leq j
    \leq n+1 \rbrace \cup \lbrace C_{q,j} \mid q \in Q, \; 0 \leq j
    \leq n+1 \rbrace,
  \]
  where the symbols $A_{v,j}$ describe which symbols appear in which
  tape cells of
  $\mathcal{M}$ and $C_{q,j}$ describes the position and the state of
  the LBA head.  More precisely:
  \begin{itemize}
  \item $A_{v,j}$ represents the situation in which cell $j$ contains
    the symbol $v$,
  \item $C_{q,j}$ represents the situation in which the head is on
    cell $j$ and in state $q$.
  \end{itemize}

  We construct the rules of $\Pi$ as the union
  $R = \bigcup_{\rho \in \delta} R_{\rho}$, where each instruction
  $\rho = (q,X;p,Y,D)$ of $\mathcal{M}$ is simulated by a set of
  Boolean P system rules in the following way, depending on the
  direction of the movement of the head:
  \[
    \begin{array}{lll}
      D = R: & R_{(q,X;p,Y,R)} = \{ \, \lbrace A_{X,j}, C_{q,j} \rbrace
               \rightarrow \lbrace A_{Y,j}, C_{p,j+1} \rbrace \mid \mathbf{1}
               \,& \mid 0\leq j \leq n \, \}, \\
      D = S: & R_{(q,X;p,Y,S)} = \{ \, \lbrace A_{X,j}, C_{q,j} \rbrace
               \rightarrow \lbrace A_{Y,j}, C_{p,j} \rbrace \mid \mathbf{1}
               \,& \mid 0\leq j \leq n+1 \, \}, \\
      D = L: & R_{(q,X;p,Y,L)} = \{ \, \lbrace A_{X,j}, C_{q,j} \rbrace
               \rightarrow \lbrace A_{Y,j}, C_{p,j-1} \rbrace \mid \mathbf{1}
               \,& \mid 1\leq j \leq n+1 \, \}.
    \end{array}
  \]

  The evolution of $\Pi$ is governed by the quasimode
  $\tilde M = \{R\}$.  Due the form of the left-hand sides of the
  rules above, if the current state contains exactly one state symbol
  of the form $C_{q,j}$, at most one rule in $R$ will be applicable.

  We finally define the singleton set of target states:
  \[
    S_\omega = \lbrace \lbrace A_{B,j} \; \mid 1 \leq j \leq n \rbrace
    \cup \lbrace C_{q_1,0}, A_{Z_l,0}, A_{Z_r, n+1} \rbrace \rbrace.
  \]
  The only state appearing in $S_\omega$ therefore corresponds to the
  halting configuration of $\mathcal{M}$ in which all tape cells are
  blank except cells 0 and $n+1$ which contain the left and right end
  delimiters $Z_l$ and $Z_r$ respectively, and the head is on cell
  0 and in state $q_1$.

  It is a direct consequence of the definition of the rules in $R$
  that the LBA $\mathcal{M}$ accepts a word $x = v_1 v_2 \dots v_n$ of
  length $n$ if and only if the reachability problem
  $(\Pi, \tilde{M}, \lbrace s_x \rbrace, S_\omega )$ has a solution,
  where
  $s_x = \lbrace A_{v_j,j} \mid 1 \leq j \leq n \rbrace \cup \lbrace
  C_{q_0,1} \rbrace $.

  \paragraph{Complexity.}
  The number of symbols in $\Pi$ is $|\tilde{V}| = (n+2) (|V| + |Q|)$
  and the number of rules is $|R| = \mathcal{O}(n |V| |Q|)$, so the
  Boolean P system $\Pi$ can be constructed in time
  $\mathcal{O}(n |V| |Q|)$.

  Since $\tilde{M}$ is a singleton and its only element is of cardinal
  $|R| = \mathcal{O}(n|V| |Q|)$, the quasimode can be constructed in
  time
  $\mathcal{O}\left(n|V| |Q| \cdot \log(n|V| |Q|)\right)$---roughly,
  the number of rules times the number of bits necessary to
  describe a rule. Because there is only one starting state and one
  target state, and since a state can be described by a sequence of
  $n+3$ symbols ($n+2$ for the tape and $1$ for the state of the
  head), the whole description $(\Pi, \tilde{M}, S_\alpha, S_\omega)$
  can be constructed in the following time:
  \[
    \mathcal{O}\left(n |V| |Q| \cdot \log(n |V| |Q|)\right) =
    \mathcal{O}\left( (n |V| |Q|)^2\right).
  \]
  This expression is polynomial in the size of the specification of
  $\mathcal{M}$ and in the length $n$ of the input $x$, which
  concludes the proof. \qed
\end{proof}

We will now show the symmetrical statement that reachability in
Boolean P~systems is at most as hard as \LBAACCEPTANCE.

\begin{lemma}\label{lem:BPS_in_PSPACE}
  Reachability for Boolean P systems is in \PSPACE.
\end{lemma}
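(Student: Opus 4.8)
The plan is to show that the reachability problem $(\Pi, \tilde M, S_\alpha, S_\omega)$ can be decided by a nondeterministic machine using only polynomial space, and then invoke Savitch's theorem together with the closure of $\PSPACE$ under complement to conclude. First I would observe that a configuration $W \subseteq V$ of a Boolean P system $\Pi = (V, R)$ can be stored in $|V|$ bits, which is linear in the size of the input to the reachability problem; likewise a single set of rules $R' \subseteq R$ occurring in $\tilde M$ can be stored in $|R|$ bits, and whether $R' \in \tilde M$ and whether $R'$ is applicable to $W$ (i.e. whether $R' \in \mathit{Appl}(\Pi, W)$) can be checked by scanning the input description of $\tilde M$ and by evaluating, for each rule $A \to B \mid \varphi$ of $R'$, the propositional formula $\varphi$ on $W$ and the inclusion $A \subseteq W$. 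All of these checks run in polynomial time and hence in polynomial space.

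Next I would describe the nondeterministic procedure: starting from a state $s_0 \in S_\alpha$, repeatedly guess a set of rules $R' \in M(W) = \tilde M \cap \mathit{Appl}(\Pi, W)$, verify its membership and applicability as above, and replace the current configuration $W$ by $(W \setminus \bigcup_{r \in R'} A_r) \cup \bigcup_{r \in R'} B_r$, accepting as soon as the current configuration lies in $S_\omega$. The only subtlety is termination of the search: since there are at most $2^{|V|}$ distinct configurations, any configuration in $S_\omega$ reachable from $s_0$ is reachable within $2^{|V|}$ steps, so I would maintain a step counter up to $2^{|V|}$, which fits in $|V|$ bits. This gives an $\NPSPACE$ algorithm deciding, for a fixed $s_0 \in S_\alpha$, whether some state of $S_\omega$ is reachable from $s_0$. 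By Savitch's theorem $\NPSPACE = \PSPACE$, and since the reachability problem requires reachability from \emph{every} state of $S_\alpha$, we run this subroutine for each $s_0 \in S_\alpha$ in turn (reusing space) and accept only if all succeed; iterating over $S_\alpha$ costs only the space to hold one element at a time plus a pointer into the input.

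The main obstacle I anticipate is not the algorithm itself but making the space bound genuinely polynomial in the \emph{size of the problem description}, which includes the explicit listing of $\tilde M$ and of $S_\alpha, S_\omega$. Since $\tilde M$ may itself be exponentially large in $|V|$ (as the preceding remark in the paper emphasizes), it is given explicitly as part of the input, so $|V|$, $|R|$, $|\tilde M|$, $|S_\alpha|$, and $|S_\omega|$ are all at most the input length; the step counter bound $2^{|V|}$ is then at most $2^{\text{(input length)}}$ and hence representable in polynomially many bits. I would make this accounting explicit so that the polynomial-space claim is airtight. Combining Lemma~\ref{lem:BPS_sim_LBA} with this lemma and the $\PSPACE$-completeness of $\LBAACCEPTANCE$ then yields that reachability for Boolean P systems is $\PSPACE$-complete.
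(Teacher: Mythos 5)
Your proposal is correct and follows essentially the same route as the paper: a non-deterministic search that, for each starting state, guesses quasimode elements step by step, bounds the trajectory length by $2^{|V|}$ with a counter, checks membership in $S_\omega$, and concludes via Savitch's theorem, with the same observation that the explicit listing of $\tilde M$ in the input keeps all checks polynomial-space. (The appeal to closure of \PSPACE{} under complement is unnecessary---the conjunction over $S_\alpha$ is handled directly inside the non-deterministic machine, as in the paper's algorithm---but this does not affect correctness.)
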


\begin{proof}
  We will prove that reachability for Boolean P systems is in
  \NPSPACE, which implies the required statement by Savitch's theorem
  \cite{savitch1970relationships}.

  \smallskip

  Let $(\Pi, \tilde{M}, S_\alpha, S_\omega)$, with $\Pi = (V, R)$, be
  an instance of the reachability problem.
  Algorithm \ref{alg:BPS-reach} is a non-deterministic algorithm that
  solves this problem in polynomial space.
  \begin{algorithm}
    \caption{Solving reachability for Boolean P systems in \NPSPACE}
    \label{alg:BPS-reach}
    \begin{algorithmic}
    \Require $(\Pi, \tilde{M}, S_\alpha, S_\omega)$, $\Pi = (V, R)$
    \Ensure $\textit{Reachable} = \textit{true} \iff (\Pi, \tilde{M}, S_\alpha, S_\omega) \text{ has a solution} $
    \ForAll{$x \in S_\alpha$}
      \State $i \gets 0$
      \State $s \gets x$
      \State $\textit{Reachable}_x \gets \textit{false}$
      \While{$i < 2^{|V|}$}
        \State $i \gets i + 1$
        \If{$s \in S_\omega$}
          \State $\textit{Reachable}_x \gets \textit{true}$
        \EndIf
        \State Pick non-deterministically $m \in \tilde{M}$
        \State $s \gets \textit{UPDATE}_\Pi(s, m)$
      \EndWhile
    \EndFor
    \State $\textit{Reachable} \gets \bigwedge_{x \in S_\alpha} \textit{Reachable}_x$
  \end{algorithmic}
  \end{algorithm}
  The function $\textit{UPDATE}_\Pi$ takes a state $s$ of $\Pi$ and an
  element of a quasimode $m \in \tilde{M}$, and returns the state
  updated according to the rules $R$ of $\Pi$ and the chosen element
  of the quasimode, as defined in Section~\ref{sec:quasimodes}.

  Since the number of possible states of $\Pi$ is $2^{|V|}$, the
  shortest evolution between two states is of length at most
  $2^{|V|}$, if it exists. Algorithm \ref{alg:BPS-reach} therefore
  non-deterministically tests all possible evolutions of length at
  most $2^{|V|}$, starting from all states in $S_\alpha$.  At the end
  \textit{Reachable} gets the value \textit{true} if and only if
  a state in $S_\omega$ can be reached from every state in $S_\alpha$,
  which ensures the correctness of the algorithm.

  \smallskip

  This algorithm runs in polynomial space in the size of the
  reachability problem. Note that several states, a counter up to
  $2^{|V|}$, and $|S_\alpha|$ Boolean flags are stored, all of which
  takes up $\mathcal{O}(|V| + |S_\alpha|)$ space. Furthermore, the
  function $\textit{UPDATE}_\Pi$ can be evaluated in polynomial
  space. Indeed, to determine the set of applicable rules in a state
  $s$, one needs to check for each rule if the guard is true and if
  the left part of the rule is present in $s$. Both operations, the
  evaluation of a Boolean function and a comparison, can be carried
  out in polynomial space with respect to $|V|$. Only the rules in $m$
  are then applied, and these applications can be carried out in
  polynomial space with respect to $|V|$ and $|R|$. \qed
\end{proof}

\begin{remark}
  The argument of Lemma~\ref{lem:BPS_in_PSPACE} focuses on
  reachability under quasimodes.  This argument can be trivially
  extended to modes derivable from quasimodes, and more generally to
  any mode for which non-deterministically picking a set $m$ of rules
  to apply can be done in polynomial space.
\end{remark}

The following theorem brings together Lemmas~\ref{lem:BPS_sim_LBA}
and~\ref{lem:BPS_in_PSPACE} to show the main result with respect to
the complexity of reachability.

\begin{theorem}\label{thm:BooleanPsys_PSPACE}
  Reachability for Boolean P systems is \PSPACE-complete
\end{theorem}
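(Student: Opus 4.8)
The plan is to assemble the two preceding lemmas into the completeness claim, with essentially no new work. Recall that a problem is \PSPACE-complete when it lies in \PSPACE and every \PSPACE problem reduces to it in polynomial time; by transitivity of polynomial-time reductions, for the hardness half it suffices to exhibit a single \PSPACE-complete problem that reduces to it.

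First I would invoke Lemma~\ref{lem:BPS_sim_LBA}, which provides a polynomial-time many-one reduction from $\LBAACCEPTANCE$ to the reachability problem for Boolean P systems. Since $\LBAACCEPTANCE$ is itself \PSPACE-complete (as recalled in Section~\ref{sec:prelim}, following~\cite{gareyjohnson}), composing an arbitrary polynomial-time reduction from a \PSPACE problem to $\LBAACCEPTANCE$ with the reduction of Lemma~\ref{lem:BPS_sim_LBA} yields a polynomial-time reduction from that problem to Boolean P system reachability; hence reachability for Boolean P systems is \PSPACE-hard. Second, Lemma~\ref{lem:BPS_in_PSPACE} already establishes that the same problem lies in \PSPACE. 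Combining the two statements gives that reachability for Boolean P systems is \PSPACE-complete.

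The only point worth a moment of care---and not a genuine obstacle---is to confirm that the two lemmas concern the same decision problem: Lemma~\ref{lem:BPS_sim_LBA} outputs instances $(\Pi,\tilde M,S_\alpha,S_\omega)$ formulated with a quasimode, and Lemma~\ref{lem:BPS_in_PSPACE} is likewise phrased for reachability under quasimodes (and, by the remark following it, extends to any mode for which a candidate rule set can be guessed in polynomial space). So the hardness and the membership results refer to a common formulation, and the theorem follows immediately.
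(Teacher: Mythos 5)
Your proposal is correct and matches the paper's own argument: the theorem is obtained exactly by combining Lemma~\ref{lem:BPS_sim_LBA} (hardness via the polynomial-time reduction from the \PSPACE-complete $\LBAACCEPTANCE$ problem) with Lemma~\ref{lem:BPS_in_PSPACE} (membership in \PSPACE). Your added check that both lemmas address the same quasimode formulation of reachability is a sensible, if brief, point of care consistent with the paper's remark following Lemma~\ref{lem:BPS_in_PSPACE}.
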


\section{Complexity of Sequential Controllability}
\label{sec:seq-pspace}

In this section we first extend the CoFaSe problem with some
additional details necessary to properly reason about its complexity,
and then show that sequential controllability of BCN is
\PSPACE-complete.

\subsection{CoFaSe and Control Modes}
Theorem~\ref{thm:bp-seq} shows that Boolean P systems can directly
simulate Boolean networks together with the master control system, and
Theorem~\ref{thm:BooleanPsys_PSPACE} shows that reachability for
Boolean P systems is \PSPACE-complete.  Nevertheless, we cannot
immediately conclude that CoFaSe is \PSPACE-complete because of the
role modes and quasimodes play in evaluating the size of the problem.

Consider a BCN $F_U$ with the variables $X$ and the control inputs
$U$, and recall that the CoFaSe problem is given by the triple
$(F_U, S_\alpha, S_\omega)$, where $S_\alpha, S_\omega \subseteq S_X$
are the sets of starting and target states respectively.
The simulating Boolean P~system $\Pi \cup \Pi_U$ constructed in
Theorem~\ref{thm:bp-seq} uses the quasimode
\[
  \tilde M_{U} =  \{R_U^0\} \dottimes 2^{R_U^1},
\]
for which $|\tilde M_U| = 2^{|U|}$, meaning that size of the
reachability problem for $\Pi \cup \Pi_U$ is always exponential in the
size of $U$, independently of the sizes of the individual elements of
the triple $(F_U, S_\alpha, S_\omega)$ \footnote{In general, the
  description of $F_U$ is of size $\mathcal{O}(2^{|X||U|})$, because
  some Boolean functions may require an exponential number of Boolean
  connectors $\wedge$, $\vee$, $\bar \cdot$ to be represented.
  $S_\alpha$ and $S_\omega$ are of size $\mathcal{O}(|X|)$ by their
  definition.  In practice, however, the sizes of these entities are
  often well under the respective upper
  bounds~\cite{BianeD19,PardoID21}.}.  As a consequence, directly
combining Theorems~\ref{thm:bp-seq} and~\ref{thm:BooleanPsys_PSPACE}
is not guaranteed to yield a polynomial bound on space in terms of the
size of the CoFaSe problem $(F_U, S_\alpha, S_\omega)$.


We believe that the correct way to deal with this issue is to
include a specification of the master system emitting the
controls into the description of the problem of sequential
controllability.  Indeed, CoFaSe is formulated for the situation in
which the control can change at any moment~\cite{PardoID21}, and this
information is not explicitly included in its definition, while it is
explicitly present in the P system $\Pi \cup \Pi_U$ from
Theorem~\ref{thm:bp-seq}.

We propose to describe the possible changes in controls by defining
a relation on $2^U$---the control mode.  A \emph{control mode} for
a BCN $F_U$ is a relation $\mathcal{R}_U \subseteq 2^U \times 2^U$
describing the possible evolutions of control inputs.  More precisely,
consider the following trajectory of the BCN $F_U$:
\[
  s_1 \xrightarrow{F_U(\mu_1)} s_2 \xrightarrow{F_U(\mu_2)} s_3
  \xrightarrow{F_U(\mu_3)} \dots \xrightarrow{F_U(\mu_n)} s_{n+1}.
\]
This trajectory complies with the control mode $\mathcal{R}_U$ if and
only if $(\mu_i, \mu_{i+1}) \in \mathcal{R}_U$, for every
$1 \leq i \leq n$.

\begin{example}
  Control modes naturally capture the types of control sequences given
  at the end of Section~\ref{sec:boolp-seq} and initially discussed
  in~\cite{PardoID19}.  To streamline the definitions of the
  corresponding control modes, we introduce the following helper
  function:
  \[
    \textit{idx} : 2^U \to 2^{\{1, \dots, |U|\}}, \quad
    \textit{idx}(\mu) = \{i \mid u_i^\star \in \mu, \star \in \{0,
    1\}\}.
  \]
  In other words, \textit{idx} produces the set of control input
  indices appearing in a control $\mu$, irrespectively of the nature
  of the control input (freeze to 0 or freeze to 1).

  We can now define the control mode $\mathcal{R}_U^\textit{TCS}$
  capturing Total Control Sequences (TCS) as follows:
  \[
    \forall \mu,\nu \in 2^U :
    (\mu,\nu) \in \mathcal{R}_U^\textit{TCS}
    \iff
    \textit{idx}(\mu) = \textit{idx}(\nu) = \textit{idx}(U).
  \]
  Informally, $\mathcal{R}_U^\textit{TCS}$ includes all those pairs of
  controls which act on every single controlled variable by activating
  one of the corresponding control inputs.

  Similarly, in the case of Abiding Control Sequences (ACS), the
  control mode $\mathcal{R}_U^\textit{ACS}$ can be defined as follows:
  \[
    \forall \mu,\nu \in 2^U : (\mu,\nu) \in \mathcal{R}_U^\textit{TCS}
    \iff \textit{idx}(\mu) \subseteq \textit{idx}(\nu).
  \]
  Thus, $\mathcal{R}_U^\textit{ACS}$ only allows to transition from
  $\mu$ to $\nu$ if $\nu$ acts at least on the same controlled
  variables as $\mu$.  Note that $\nu$ is allowed to change the value
  to which a controlled variable $x_i$ is frozen by replacing $u_i^0$
  by $u_i^1$ or vice versa.  \qed
\end{example}

We now define an extension of CoFaSe to capture sequential
controllability of BCN in a more general framework.  The $\SEQCONTROL$
problem is given by the 5-tuple
$(F_U, M, \mathcal{R}_U, S_\alpha, S_\omega)$ and consists in deciding
whether for every starting state in $S_\alpha$ there exists an initial
control $\mu_0$ and a trajectory of the BCN $F_U$ under the mode $M$
and the control mode $\mathcal{R}_U$ ending up in a target state from
$S_\omega$.  $\mu_0$ must appear as a the first term in at least
a pair of $\mathcal{R}_U$:
$\exists \nu \subseteq U : (\mu_0, \nu) \in \mathcal{R}_U$.

\begin{example}\label{ex:seq-control-mode}\nocite{eMathHelp,dds}
  Consider the Boolean network $F_U$ described in
  Figure~\ref{fig:seq-control-mode}, as well as the controls
  $\mu_{110} = \{u_1^1, u_2^1\}$, freezing both $x_1$ and $x_2$ to 1,
  and $\mu_\emptyset = \emptyset$.  If
  $(\mu_{110}, \mu_{110}), (\mu_{110}, \mu_\emptyset) \in
  \mathcal{R}_U$ then the following trajectory is possible:
  \[
    000 \xrightarrow{F_U(\mu_{110})}
    110 \xrightarrow{F_U(\mu_{110})}
    111 \xrightarrow{F_U(\mu_\emptyset)}
    001.
  \]

  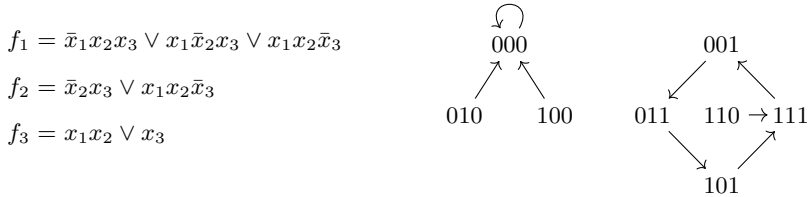
\begin{figure}
    \centering%
    {\renewcommand{\arraystretch}{1.6}
      $\begin{array}{lcl}
         f_1 &=& \bar x_1 x_2 x_3  \vee  x_1 \bar x_2 x_3  \vee  x_1 x_2 \bar x_3 \\
         f_2 &=& \bar x_2 x_3  \vee  x_1 x_2 \bar x_3 \\
         f_3 &=&  x_1 x_2 \vee x_3
       \end{array}$
     }
     \hspace{9mm}
     \begin{tikzpicture}[node distance=5mm and 2mm,baseline,yshift=6.1mm]
       \node (000) {$000$};
       \node[below=of 000,xshift=-6mm] (010) {$010$};
       \node[below=of 000,xshift=6mm] (100) {$100$};

       \node[right=21mm of 000] (001) {$001$};
       \node[below=of 001] (110) {$110$};
       \node[left=of 110] (011) {$011$};
       \node[right=of 110] (111) {$111$};
       \node[below=of 110] (101) {$101$};

       \draw[->] (000) to[out=60,in=120,looseness=5] (000);
       \draw[->] (001) -- (011);
       \draw[->] (010) -- (000);
       \draw[->] (011) -- (101);
       \draw[->] (100) -- (000);
       \draw[->] (101) -- (111);
       \draw[->,shorten >=-.5mm] (110) -- (111);
       \draw[->] (111) -- (001);
     \end{tikzpicture}
     \caption{The update functions of the Boolean network from
       Example~\ref{ex:seq-control-mode} (left) as well as its
       uncontrolled synchronous dynamics (right).}
    \label{fig:seq-control-mode}
  \end{figure}

  Suppose now that only freezing $x_1$ or $x_2$ separately is
  permitted, i.e.
  $i \in \textit{idx}(\mu) \cap \{1, 2\} \implies
  \textit{idx}(\mu) = \{i\}$, for any $\mu$ appearing in a pair in
  $\mathcal{R}_U$.  In this case, $F_U$ can reach $100$ or $010$ from
  $000$ by respectively controlling $x_1$ or $x_2$ to~1.
  The following 3 scenarios are possible afterwards:
  \begin{enumerate}
  \item maintain the control of $x_1$ or $x_2$ and stay in the same
    state;
  \item freeze the other variable---$x_2$ if $x_1$ was controlled or
    $x_1$ if $x_2$ was controlled, and switch to the other
    state---$010$ or $100$ respectively;
  \item release all controls and go back to $000$.
  \end{enumerate}
  In any of these cases, $F_U$ is not able to reach $001$ with the
  above restriction on the control mode.

  Finally, suppose that once $\mu_{110}$ is employed, it must be
  maintained for the rest of the trajectory, i.e.
  $(\mu_{110}, \nu) \in \mathcal{R}_U \implies \nu = \mu_{110}$.
  The previous paragraph shows that the only way for $F_U$ to leave
  the connected component consisting of the states $\{000, 100, 010\}$
  while starting from $000$ is to apply $\mu_{110}$.  On the other
  hand, since $\mu_{110}$ cannot be deactivated once applied, this
  means that $F_U$ cannot reach $001$ from $000$ with this restriction
  on the control mode. \qed
\end{example}

\subsection{$\SEQCONTROL$ and CoFaSe Are \PSPACE-complete}
\label{sec:pspace-complete}

We start by combining Theorems~\ref{thm:bp-seq}
and~\ref{thm:BooleanPsys_PSPACE} to characterize the complexity of
$\SEQCONTROL$.

\begin{theorem}
  \label{thm:seq-bcn-PSPACE}
  $\SEQCONTROL$ is \PSPACE-complete.
\end{theorem}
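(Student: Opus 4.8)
The plan is to establish membership in \PSPACE\ and \PSPACE-hardness separately, leveraging the machinery already developed in the paper. For membership, I would first argue that $\SEQCONTROL$ reduces in polynomial time to reachability for Boolean P systems, and then invoke Theorem~\ref{thm:BooleanPsys_PSPACE}. The key subtlety---already flagged in the discussion preceding the theorem---is that the naive simulation via $\Pi \cup \Pi_U$ from Theorem~\ref{thm:bp-seq} uses a quasimode of size $2^{|U|}$, which is \emph{not} polynomial in the size of the $\SEQCONTROL$ instance $(F_U, M, \mathcal{R}_U, S_\alpha, S_\omega)$. The fix is that now the control mode $\mathcal{R}_U$ is given explicitly as part of the input, so I would build a Boolean P system whose master component $\Pi_U$ is driven by a quasimode derived directly from $\mathcal{R}_U$ rather than from the full power set $2^{R_U^1}$. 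Concretely, for each pair $(\mu,\nu)\in\mathcal{R}_U$ one includes the set of rules that erases all control symbols and reintroduces exactly those of $\nu$, guarded so as to fire only when the current control symbols encode $\mu$; the quasimode is then the collection of these rule-sets, one per element of $\mathcal{R}_U$, so $|\tilde M_U| = |\mathcal{R}_U|$, which is polynomial in the input size. One also folds the mode $M$ on $X$ into the quasimode $\tilde M$ as in Theorem~\ref{thm:bp-seq}, assuming---as the remark after Lemma~\ref{lem:BPS_in_PSPACE} allows---that picking an element of $M$ can be done in polynomial space. With this polynomial-size reachability instance in hand, membership in \PSPACE\ follows from Theorem~\ref{thm:BooleanPsys_PSPACE}.

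For \PSPACE-hardness, I would reduce \LBAACCEPTANCE\ (or equivalently reachability for Boolean P systems, via Lemma~\ref{lem:BPS_sim_LBA}) to $\SEQCONTROL$. Given an LBA $\mathcal{M}$ and input $x$, Lemma~\ref{lem:BPS_sim_LBA} already produces a Boolean P system $\Pi$ whose reachability encodes acceptance. I would repackage $\Pi$ as a BCN with an empty or trivial set of control inputs $U$, a control mode $\mathcal{R}_U$ that is essentially a single self-loop (so that the control never changes and plays no role), the synchronous mode $M$ corresponding to the quasimode $\{R\}$ used in that lemma, and $S_\alpha, S_\omega$ taken directly from the reachability instance. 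Since $\Pi$ has no genuine controlled variables, a trajectory of the induced BCN under this trivial control mode is exactly an evolution of $\Pi$, so $\SEQCONTROL$ has a solution if and only if the reachability instance does, i.e.\ if and only if $\mathcal{M}$ accepts $x$. The construction is clearly polynomial because the one from Lemma~\ref{lem:BPS_sim_LBA} is.

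Combining the two directions yields the theorem. I expect the main obstacle to be the membership direction---specifically, making precise the claim that $\mathcal{R}_U$ being part of the input circumvents the exponential blow-up, and carefully specifying the guarded rules in $\Pi_U$ that realize an arbitrary relation $\mathcal{R}_U$ on $2^U$ while keeping the quasimode of size $|\mathcal{R}_U|$. One must check that the guard encoding ``the current control symbols are exactly $\mu$'' is a polynomial-size propositional formula (it is: a conjunction over $U$), and that applying such a rule-set correctly implements one step of control evolution in the sense of the evolution of $F_U$ under $\mathcal{R}_U$ as defined just above. The hardness direction, by contrast, is essentially bookkeeping on top of Lemma~\ref{lem:BPS_sim_LBA}.
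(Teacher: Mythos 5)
Your proposal is correct, and on the membership side it is essentially the paper's argument: the paper also exploits the fact that $\mathcal{R}_U$ is part of the input, replacing the exponential quasimode $\{R_U^0\}\dottimes 2^{R_U^1}$ of Theorem~\ref{thm:bp-seq} by one rule per pair of the control mode, namely $R_U = \{\,\mu_1 \to \mu_2 \mid \mathbf{1} \,\mid (\mu_1,\mu_2)\in\mathcal{R}_U\,\}$ with the quasimode consisting of the corresponding singletons, so that the reachability instance has size polynomial in $|\mathcal{R}_U|$ and Theorem~\ref{thm:BooleanPsys_PSPACE} applies. Your variant with guards asserting that the current control symbols are \emph{exactly} $\mu$ (and erasing all of $U$ before reintroducing $\nu$) is a slightly more defensive encoding than the paper's unguarded rules $\mu_1\to\mu_2\mid\mathbf{1}$, which rely on the invariant that the control part of the configuration always equals some $\mu$ occurring in $\mathcal{R}_U$; both encodings stay polynomial. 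Where you genuinely diverge is the hardness direction: the paper simply sets $U=\emptyset$ and observes that $\SEQCONTROL$ then contains reachability for Boolean networks, which is already known to be \PSPACE-complete (citing the literature), whereas you re-derive hardness from \LBAACCEPTANCE\ via Lemma~\ref{lem:BPS_sim_LBA} and then repackage the resulting Boolean P system as a BCN with trivial controls. Your route is self-contained but needs one step the paper never establishes (it only proves the direction of Theorem~\ref{thm:bp-bn}, Boolean networks into Boolean P systems): you must convert the P system of Lemma~\ref{lem:BPS_sim_LBA}, run under the quasimode $\{R\}$, into a synchronous Boolean network of polynomial size. This is doable---for each symbol $v$ the update function is a disjunction of the applicability conditions of rules producing $v$, together with $v$ conjoined with the negated applicability conditions of rules consuming it---but it should be stated explicitly; as written, ``repackage $\Pi$ as a BCN'' is the one place where your proof sketch leans on an unproved (if routine) conversion. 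The paper's choice buys brevity at the cost of an external citation; yours buys self-containment at the cost of this extra translation lemma.
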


\begin{proof}
  $\SEQCONTROL$ is \PSPACE-hard, since by taking $U = \emptyset$ it is
  reduced to the problem of reachability for Boolean networks, known
  to be \PSPACE-complete~\cite{ChengEP1995,PardoID19}.

  Let now $(F_U, M, \mathcal{R}_U, S_\alpha, S_\omega)$ be an instance
  of $\SEQCONTROL$ and consider the following set of rules:
  \[
    R_U = \{\, \mu_1 \to \mu_2 \mid \mathbf{1} \, \mid (\mu_1, \mu_2) \in
    \mathcal{R}_U \,\},
  \]
  as well as the quasimode $\tilde M_U = \{r \mid r \in R_U\}$.
  The Boolean P system $\Pi_U = (U, R_U)$ running under the quasimode
  $\tilde M_U$ will therefore simulate the changes in controls allowed
  by the control mode $\mathcal{R}_U$.

  We can now construct the reachability problem
  $(\Pi \cup \Pi_U, \tilde{M} \dot{\times} \tilde{M}_U, S_\alpha,
  S_\omega)$ in the same way as in Theorem \ref{thm:bp-seq}.
  The entire construction, including that of $R_U$, happens in
  polynomial time with respect to the size of the initial instance of
  $\SEQCONTROL$.  This allows us to conclude the proof by invoking the
  fact that reachability in Boolean P systems is \PSPACE-complete
  (Theorem~\ref{thm:BooleanPsys_PSPACE}). \qed
\end{proof}

As explained in the previous section, $\SEQCONTROL$ being
\PSPACE-complete does not immediately imply that CoFaSe is
\PSPACE-complete, since translating from CoFaSe to $\SEQCONTROL$ may
require exponential increase in space.  However, it is possible to
directly prove that CoFaSe is in $\PSPACE$ by using a variation of
Algorithm~\ref{alg:BPS-reach} from Lemma~\ref{lem:BPS_in_PSPACE}.

\begin{theorem}\label{thm:CoFaSe-PSPACE}
  CoFaSe is PSPACE-complete.
\end{theorem}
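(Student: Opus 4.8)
The plan is to establish both directions separately: \PSPACE-hardness and membership in \PSPACE. Hardness is immediate and can be dispatched in one sentence exactly as in Theorem~\ref{thm:seq-bcn-PSPACE}: taking $U = \emptyset$ reduces reachability for Boolean networks (known \PSPACE-complete) to CoFaSe, so CoFaSe is \PSPACE-hard. The real work is showing CoFaSe $\in \PSPACE$, and the key obstacle is precisely the one flagged in Section~\ref{sec:seq-pspace}: we cannot simply feed the translation through Theorem~\ref{thm:bp-seq} and Theorem~\ref{thm:BooleanPsys_PSPACE}, because the quasimode $\tilde M_U = \{R_U^0\} \dottimes 2^{R_U^1}$ has size $2^{|U|}$, which is exponential in the size of the CoFaSe instance $(F_U, S_\alpha, S_\omega)$. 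So the plan is to adapt Algorithm~\ref{alg:BPS-reach} directly to CoFaSe, exploiting the fact that picking an element of this exponential-size quasimode can nonetheless be done in polynomial space.

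\textbf{Membership in \PSPACE.} First I would show CoFaSe $\in \NPSPACE$; Savitch's theorem~\cite{savitch1970relationships} then gives the result. I would run the composite system $\Pi \cup \Pi_U$ from Theorem~\ref{thm:bp-seq} implicitly, without ever materializing the quasimode. Concretely, the algorithm iterates over each starting state $s \in S_\alpha$ (there are at most $2^{|X|}$ of them, each described in $\mathcal{O}(|X|)$ bits), initializes the control to a nondeterministically guessed $\mu_0 \in S_U$, and then simulates at most $2^{|X|+|U|}$ steps of the controlled dynamics. At each step it nondeterministically guesses: (i) a next control $\mu'$ consistent with the ``control can change at any moment'' semantics of CoFaSe — i.e., any element of $S_U$, which is just a subset of $U$ and hence $\mathcal{O}(|U|)$ bits — and (ii) the set $m \in M$ of variables to update under the fixed mode $M$ (for CoFaSe, $M$ is the synchronous mode, so $m = X$ and there is no choice). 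It then computes the successor state via the update functions $F_U(\mu')$ applied to the current state $s$. Each such evaluation of a propositional update formula is a Boolean-function evaluation, doable in space polynomial in the size of the formula, hence polynomial in the size of the CoFaSe instance. A flag records whether $S_\omega$ was hit along the way; after exhausting the bounded loop, the per-start flags are conjoined. The total storage is a handful of states, a counter up to $2^{|X|+|U|}$, and $|S_\alpha|$ flags — all $\mathcal{O}(|X| + |U| + |S_\alpha|)$ bits — so the algorithm runs in polynomial space. The key point, which I would state explicitly, is that guessing a single control $\mu' \in S_U$ costs only $|U|$ bits even though the corresponding quasimode element lives in a space of size $2^{|U|}$: we never need the quasimode as an object, only the single guessed element.

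\textbf{Correctness.} I would argue that the loop bound $2^{|X|+|U|}$ suffices: a trajectory that reaches $S_\omega$ at all does so through a reachable configuration of the pair (state, active control), of which there are at most $2^{|X|} \cdot 2^{|U|}$, so a shortest witnessing trajectory revisits no configuration and has length below this bound. The decorrelation remark after Example~\ref{ex:bcn} — that a control sequence $\mu_{[n]}$ may yield trajectories of any length $\geq n$ — is exactly what lets us treat ``change the control'' as just another nondeterministic move inside the step loop rather than a separate outer layer, so the same bounded search captures all valid evolutions. Putting the two directions together yields CoFaSe $\in \PSPACE$ and CoFaSe \PSPACE-hard, hence \PSPACE-complete. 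The main obstacle, and the one subtlety worth dwelling on, is this separation between the size of the quasimode (exponential) and the cost of sampling one element of it (polynomial); once that is made explicit the rest is a routine adaptation of Algorithm~\ref{alg:BPS-reach}.
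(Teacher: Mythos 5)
Your proposal is correct and follows essentially the same route as the paper: a non-deterministic polynomial-space search that never materializes the quasimode, guessing a control $\mu \in S_U$ at each step, applying $F_U(\mu)$, bounding the trajectory length by the number of configurations, and closing with Savitch's theorem (the paper's Algorithm~\ref{alg:CoFaSe} is exactly this adaptation of Algorithm~\ref{alg:BPS-reach}). The only differences are cosmetic: you bound the loop by $2^{|X|+|U|}$ over (state, control) pairs where the paper uses the tighter $2^{|X|}$ since unconstrained controls make reachability depend on the state alone, and you argue \PSPACE-hardness via $U=\emptyset$ while the paper relies on the hardness already established in~\cite{PardoID21}.
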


\begin{proof}
  Similarly to the proof of Lemma~\ref{lem:BPS_in_PSPACE}, we show
  here a non-deterministic polynomial-space algorithm solving the
  instance of CoFaSe given by the triple $(F_U,S_\alpha, S_\omega)$:
  Algorithm~\ref{alg:CoFaSe}.

  \begin{algorithm}
    \caption{Solving CoFaSe in \NPSPACE}\label{alg:CoFaSe}
    \begin{algorithmic}
    \Require $(F_U,S_\alpha, S_\omega)$
    \Ensure $Reachable = true \iff (F_U, S_\alpha, S_\omega) \text{ has a solution} $
    \ForAll{$x \in S_\alpha$}
      \State $i \gets 0$
      \State $s \gets x$
      \State $\textit{Reachable}_x \gets \textit{false}$
      \While{$i < 2^{|X|}$}
        \State $i \gets i + 1$
        \If{$s \in S_\omega$}
          \State $\textit{Reachable}_x \gets \textit{true}$
        \EndIf
        \State Pick non-deterministically $\mu \in S_U$
        \State $s \gets F_U(\mu)(s)$
      \EndWhile
    \EndFor
    \State $\textit{Reachable} \gets \bigwedge_{x \in S_\alpha} \textit{Reachable}_x$
    \end{algorithmic}
  \end{algorithm}

  Algorithm~\ref{alg:CoFaSe} has very similar properties to
  Algorithm~\ref{alg:BPS-reach}.  Note that no requirements on the
  values of control inputs are imposed in the CoFaSe problem, meaning
  that only the state space $S_X$ needs to be explored, excluding the
  control inputs.  Since $|S_X| = 2^{|X|}$, exploring trajectories of
  length at most $2^{|X|}$ is sufficient to conclude about the
  reachability of a state in $S_\omega$ for all states in $S_\alpha$.

  Algorithm~\ref{alg:CoFaSe} stores a constant number of intermediate
  states and controls, a counter up to $2^{|X|}$, and $|S_\alpha|$
  Boolean flags, all of which takes up
  $\mathcal{O}(|V| + |U| + |S_\alpha|)$ space.  Furthermore, $F_U$ can
  be computed in polynomial space in $|X|$ and $|U|$, meaning that
  Algorithm~\ref{alg:CoFaSe} requires polynomial space in the size of
  the triple $(F_U,S_\alpha, S_\omega)$.  Finally, we conclude the
  proof by invoking Savitch's theorem \cite{savitch1970relationships},
  stating that $\NPSPACE = \PSPACE$. \qed
\end{proof}

\section{Conclusion and Discussion}
\label{sec:conclusion}

We structure the conclusion into three subsections, focusing on three
main takeaways and future research directions stemming from
this paper.

\subsection{Complexity of Sequential Controllability}
The central technical result of this work is proving that sequential
controllability of Boolean control networks (BCN) is \PSPACE-complete,
thereby closing the question left open in~\cite{PardoID21}.
One important intuition that this result yields is that sequential
controllability of BCN is not in fact harder computationally speaking
than simple reachability, in spite of the much heftier two-level setup
with a master dynamical system driving the Boolean network.  While no
explicit construction is given, it is to be expected that the
evolution of a BCN under a control sequence may be simulated by
a Boolean network, modulo a polynomial transformation.  This implies that
reasoning about sequential controllability is as hard as reasoning
about pure reachability in Boolean networks, opening a promising
direction of future work about using the most permissive
semantics~\cite{PauleveKCH2020} for sequential controllability of BCN.

We stress nevertheless that sequential controllability and
reachability being in the same complexity class does not necessary
imply that the techniques for efficiently solving reachability in
practical situations can be immediately transposed to controllability.
Exploring such possibilities is an important direction for future
research on sequential controllability of BCN.

While we extensively deal with CoFaSe in this work, it should be noted
that the ConEvs semantics explored in~\cite{PardoID21} is not treated.
The ConEvs semantics of the control sequence constraints the moments
at which the control may change to the stable states of the driven
Boolean network.  This places the master system in a feedback loop
with the driven network and changes the architecture substantially.
In particular, the computational complexity of sequential
controllability under the ConEvs semantics still remains to
be characterized.

\subsection{Boolean P Systems}
Most of the technical results presented in this paper are obtained via
Boolean P systems, a framework specifically designed for dealing with
sequential controllability in Boolean networks.  We particularly
emphasize one of our central goals: designing ad hoc formalisms very
tightly suited for a specific problem and thereby giving new
relevant viewpoints.

One of the advantages in relying on Boolean P systems is that the
language of individual rules is more flexible than that of
propositional formulae in Boolean networks.  In particular, having set
rewriting directly available allows for naturally expressing the
notions of adding, removing, or depending on resources, while the
propositional guards allow for easy checking of Boolean conditions
whenever necessary.  These two ingredients shine in
Section~\ref{sec:boolp-seq}, in which we show how a Boolean P system
can capture both the BCN and the master dynamical system emitting the
control inputs.  On the other hand, we construct Boolean P~systems
without indulging too much into computationally expensive ingredients,
which keeps the complexity of reachability in \PSPACE.

We would like to dwell specifically on the difference between
Theorems~\ref{thm:seq-bcn-PSPACE} and~\ref{thm:CoFaSe-PSPACE}, in
particular on the fact that the latter directly shows that CoFaSe is
in \PSPACE, completely eliding Boolean P systems.  First, remark that
Theorem~\ref{thm:seq-bcn-PSPACE} showing that $\SEQCONTROL$ is
$\PSPACE$-complete is in fact more general, as it holds for any mode
and for any control mode, incorporating \emph{en passant} different
kinds of control sequences such as TCS, ACS, etc.  Secondly, remark
that Algorithm~\ref{alg:CoFaSe} in Theorem~\ref{thm:seq-bcn-PSPACE} is
directly derived from (and is a special case of)
Algorithm~\ref{alg:BPS-reach} in Lemma~\ref{lem:BPS_in_PSPACE}, which
arguably needed some general framework like Boolean P systems to
be conceived.

Going back to the ConEvs semantics mentioned in the previous
subsection, we expect that considering it in the framework of Boolean
P systems will bring new valuable insight both concerning the
characterization of its complexity and its other properties, as well
as possible optimizations for specific use cases.  Observe that ConEvs
cannot be captured as a control mode, because it introduces a backward
dependency of the control sequence on the state of the BCN.
Boolean P systems on the other hand should allow to express this
feedback elegantly, since the master system $\Pi_U$ and the driven
system $\Pi$ are both part of the same composite system
$\Pi \cup \Pi_U$ (Theorem~\ref{thm:bp-seq}), and can therefore
communicate both ways.  In fact, just from this informal reasoning we
can make a conjecture with respect to the upper bound on the
complexity of sequential controllability under the ConEvs semantics.

\begin{conjecture}
  Sequential controllability of BCN under the ConEvs semantics is in
  \PSPACE.
\end{conjecture}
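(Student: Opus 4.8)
The plan is to follow the template of Lemma~\ref{lem:BPS_in_PSPACE} and Theorem~\ref{thm:CoFaSe-PSPACE}: exhibit a nondeterministic polynomial-space algorithm and then invoke Savitch's theorem. The first task is to fix a precise operational semantics for ConEvs, which~\cite{PardoID21} only describes informally. I would model a ConEvs configuration as a pair $(s,\mu) \in S_X \times S_U$ recording both the current state of the driven network and the control currently in force. From $(s,\mu)$ the system may always take one ordinary step of $F_U(\mu)$ under the mode $M$, reaching $(s',\mu)$; and \emph{only when $s$ is a stable state of $F_U(\mu)$}---a fixed point, or more generally a configuration that no update allowed by $M$ can change---it may additionally switch to $(s,\mu')$ for an admissible new control $\mu'$, the very first control $\mu_0$ being chosen freely. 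The key observation is that this dynamics is memoryless: the set of successors of $(s,\mu)$ depends only on $(s,\mu)$. Hence ConEvs evolutions are exactly the walks in a finite directed graph on the vertex set $S_X \times S_U$, which has $2^{|X|+|U|}$ vertices, so a shortest ConEvs trajectory from $(s_\alpha,\mu_0)$ to any configuration whose state lies in $S_\omega$ is simple and therefore of length at most $2^{|X|+|U|}$.

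Second, I would write the algorithm. For each $x \in S_\alpha$ it guesses an initial control $\mu$, then loops at most $2^{|X|+|U|}$ times; in each iteration it tests whether the current state is in $S_\omega$, and then nondeterministically either applies some admissible update of $F_U(\mu)$ to the current state, or---after checking that the current state is stable under $\mu$---replaces $\mu$ by a new admissible control. It accepts iff $S_\omega$ is reached from every $x \in S_\alpha$. As in Lemma~\ref{lem:BPS_in_PSPACE}, each run stores only a constant number of states and controls, a counter up to $2^{|X|+|U|}$---note that, unlike Algorithm~\ref{alg:CoFaSe}, the counter must also account for $|U|$ because the control persists between stable states and is part of the configuration---and $|S_\alpha|$ Boolean flags, for a total of $\mathcal{O}(|X|+|U|+|S_\alpha|)$ space. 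Evaluating $F_U(\mu)$, sampling an admissible control, and checking stability---which amounts to evaluating the propositional formula $\bigwedge_{x \in X}(x \leftrightarrow F'_x)$ at the current configuration, where $F'_x$ is the controlled update formula of $x$---are all feasible in polynomial space. Savitch's theorem then gives membership in \PSPACE; and since taking $U = \emptyset$ collapses ConEvs to plain reachability in a Boolean network, which is \PSPACE-complete~\cite{ChengEP1995,PardoID19}, the problem is in fact \PSPACE-complete.

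I expect the main obstacle to be precisely this first step: committing to a definition of ConEvs faithful to the intended feedback loop of~\cite{PardoID21} while keeping the ``the control may change now'' predicate decidable in polynomial space. Under the synchronous mode ``stable state'' unambiguously means fixed point and the argument above goes through verbatim; under non-synchronous modes one might instead read it as ``lies in a terminal attractor'', which would call for a more delicate argument, although I still expect the problem to remain in \PSPACE. A pleasant alternative that bypasses the ad hoc algorithm is to realize ConEvs inside the Boolean P systems framework: the composite system $\Pi \cup \Pi_U$ of Theorem~\ref{thm:bp-seq} already admits guards over all of $X \cup U$, so the control-changing rules of $\Pi_U$ can be guarded by the stability formula $\bigwedge_{x \in X}(x \leftrightarrow F'_x)$, and membership in \PSPACE would then follow from Theorem~\ref{thm:BooleanPsys_PSPACE}, provided one verifies that the resulting quasimode can be sampled in polynomial space.
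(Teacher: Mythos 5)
The statement you are proving is not proved in the paper at all: it is stated as an open conjecture, and the surrounding discussion explicitly says that the complexity of sequential controllability under the ConEvs semantics ``still remains to be characterized'', offering only the informal observation that the feedback loop could be expressed inside the composite system $\Pi \cup \Pi_U$ of Theorem~\ref{thm:bp-seq}. So there is no paper proof to compare against; what you have written is a genuine attempt to settle the conjecture, and it is the natural one: it extends Algorithm~\ref{alg:CoFaSe} of Theorem~\ref{thm:CoFaSe-PSPACE} (itself a specialization of Algorithm~\ref{alg:BPS-reach} in Lemma~\ref{lem:BPS_in_PSPACE}) by carrying the current control $\mu$ in the configuration, allowing a control change only when the current state is stable under $F_U(\mu)$, bounding trajectory length by the size $2^{|X|+|U|}$ of the configuration graph, and invoking Savitch. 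Your second route---guarding the control-changing rules of $\Pi_U$ by the stability formula and appealing to Theorem~\ref{thm:BooleanPsys_PSPACE}---is exactly the informal idea the paper uses to motivate the conjecture, and you correctly note the caveat the paper itself raises in Section~\ref{sec:seq-pspace}: the quasimode $\{R_U^0\} \dottimes 2^{R_U^1}$ is of size $2^{|U|}$, so that reduction alone does not give a polynomial-space bound in the size of the original problem; your direct algorithm sidesteps this.

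The remaining gap is the one you yourself identify, and it is real: the argument is only as good as the formalization of ConEvs you commit to. Under the synchronous mode, ``stable state of $F_U(\mu)$'' means fixed point, the predicate is a polynomial-size propositional formula, and your algorithm goes through. Under non-synchronous modes the intended reading in~\cite{PardoID21} may be ``the driven network has converged'' in a sense closer to terminal attractors, in which case the switching predicate is no longer a simple formula evaluation; it is still decidable in \PSPACE{} (membership of $s$ in a terminal attractor of $F_U(\mu)$ reduces to mutual-reachability checks in the configuration graph, each in \NPSPACE{} $=$ \PSPACE), but that step needs to be spelled out rather than waved at. Until the definition is pinned down---including whether the target in $S_\omega$ must itself be reached at a stable state---the proposal is a convincing proof sketch for the fixed-point reading rather than a complete proof of the conjecture as the paper intends it.
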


Finally, we stress once again the point of Remark~\ref{rem:boolp-rs}:
while Boolean P systems are very closely related to reaction
systems~\cite{EhrenfeuchtR2007}, they have distinctive features which
make them a much better fit for reasoning about sequential
controllability---specifically, explicit Boolean guards and permanency
of the resources.

\subsection{Lineage of (Polymorphic) P Systems, Homoiconicity, and
  Lisp}
As we have already insisted, one central point that we bring forward
with this work is conceiving ad hoc formalisms specialized for solving
particular problems.  This approach is partially inspired by the venerable Lisp
family of programming languages, and more particularly by
language-oriented programming---a methodology proposing to start
solving problems by developing specifically-tailored programming
languages---domain-specific languages or
DSLs~\cite{FelleisenFFKBMT18,Ward1994}.

When adopting this approach, it is important that such bespoke
constructions be done within a particular general framework, lest the
design costs grow too high and the new formalisms too obscure.
In this paper, we promote P systems as such a general framework.
The community around this model of computing has been producing a wide
spectrum of variants, a far-from-exhaustive glimpse of which can be
seen in~\cite{AlhazovFI2016,imcs,JMC2022,handMC}.  The rich body of
literature provides many ingredients and various tools for easily
assembling different new formalisms.  This is why we believe that
P systems are particularly well suited for the ad hoc
formalism methodology.

We conclude this work by underlining that Boolean P systems are far
from being a frontier of how far one can go in designing specialized
formalisms.  We recall as an example polymorphic
P systems~\cite{AlhazovIR10}, in which the rules are given by pairs of
membranes rather than being part of the static description of the
system, as is classically done in automata and language theory.
Polymorphic P~systems thus implement a form of
homoiconicity---code-as-data, similarly to the Lisp languages.  A lot
more can be done in terms of customizing P systems, and we expect to
see and invest further effort into the ad hoc formalism methodology.

\subsection*{Acknowledgements}
All authors are grateful to Laurent Trilling from Université Grenoble
Alpes for fruitful discussions.

Artiom Alhazov acknowledges project 20.80009.5007.22 ``Intelligent
information systems for solving ill-structured problems, processing
knowledge and big data'' by the National Agency for Research
and Development.

\bibliographystyle{plain}
\bibliography{TCS2023PvsB}

\begin{thebibliography}{10}

\bibitem{AlhazovFI2022}
Artiom Alhazov, Rudolf Freund, and Sergiu Ivanov.
\newblock P versus {B:} {P} systems as a formal framework for controllability
  of boolean networks.
\newblock In Henning Bordihn, G{\'{e}}za Horv{\'{a}}th, and Gy{\"{o}}rgy
  Vaszil, editors, {\em Proceedings 12th International Workshop on
  Non-Classical Models of Automata and Applications, {NCMA} 2022, Debrecen,
  Hungary, August 26-27, 2022}, volume 367 of {\em {EPTCS}}, pages 28--48,
  2022.

\bibitem{AlhazovFI2016}
Artiom Alhazov, Rudolf Freund, and Sergiu Ivanov.
\newblock Polymorphic {P} systems: A survey.
\newblock Technical report, Bulletin of the International Membrane Computing
  Society, December 2016.

\bibitem{AlhazovFM12}
Artiom Alhazov, Rudolf Freund, and Kenichi Morita.
\newblock Sequential and maximally parallel multiset rewriting: reversibility
  and determinism.
\newblock {\em Nat. Comput.}, 11(1):95--106, 2012.

\bibitem{AlhazovIR10}
Artiom Alhazov, Sergiu Ivanov, and Yurii Rogozhin.
\newblock Polymorphic {P} systems.
\newblock In Marian Gheorghe, Thomas Hinze, Gheorghe P\u{a}un, Grzegorz
  Rozenberg, and Arto Salomaa, editors, {\em Membrane Computing - 11th
  International Conference, {CMC} 2010, Jena, Germany, August 24-27, 2010.
  Revised Selected Papers}, volume 6501 of {\em Lecture Notes in Computer
  Science}, pages 81--94. Springer, 2010.

\bibitem{Barabasi2011}
Albert-L\'{a}szl\'{o} Barab\'{a}si, Natali Gulbahce, and Joseph Loscalzo.
\newblock {Network medicine: a network-based approach to human disease}.
\newblock {\em Nature reviews. Genetics}, 12:56--68, 2011.

\bibitem{BianeD19}
C{\'{e}}lia Biane and Franck Delaplace.
\newblock Causal reasoning on {B}oolean control networks based on abduction:
  Theory and application to cancer drug discovery.
\newblock {\em {IEEE} {ACM} Transactions on Computational Biology and
  Bioinformatics}, 16(5):1574--1585, 2019.

\bibitem{ChengEP1995}
Allan Cheng, Javier Esparza, and Jens Palsberg.
\newblock Complexity results for 1-safe nets.
\newblock {\em Theoretical Computer Science}, 147(1):117 -- 136, 1995.

\bibitem{EhrenfeuchtR2007}
Andrzej Ehrenfeucht and Grzegorz Rozenberg.
\newblock Reaction systems.
\newblock {\em Fundamenta Informaticae}, 75(1):263--280, 2007.

\bibitem{Fearon1990}
Eric~R. Fearon and Bert Vogelstein.
\newblock A genetic model for colorectal tumorigenesis.
\newblock {\em Cell}, 61(5):759--767, 1990.

\bibitem{FelleisenFFKBMT18}
Matthias Felleisen, Robert~Bruce Findler, Matthew Flatt, Shriram Krishnamurthi,
  Eli Barzilay, Jay~A. McCarthy, and Sam Tobin{-}Hochstadt.
\newblock A programmable programming language.
\newblock {\em Commun. {ACM}}, 61(3):62--71, 2018.

\bibitem{FreundV2007}
Rudolf Freund and Sergey Verlan.
\newblock A formal framework for static (tissue) {P} systems.
\newblock In George Eleftherakis, Petros Kefalas, {\relax Gh}eorghe P\u{a}un,
  Grzegorz Rozenberg, and Arto Salomaa, editors, {\em Membrane Computing},
  volume 4860 of {\em Lecture Notes in Computer Science}, pages 271--284.
  Springer, 2007.

\bibitem{FumiaM2013}
Herman~F. Fumi\~{a} and Marcelo~L. Martins.
\newblock Boolean network model for cancer pathways: Predicting carcinogenesis
  and targeted therapy outcomes.
\newblock {\em PLOS ONE}, 8(7):1--11, 07 2013.

\bibitem{gareyjohnson}
Michael~R. Garey and David~S. Johnson.
\newblock {\em Computers and intractability. A guide to the theory of
  NP-completeness.}
\newblock WH Freeman\& Co., 1979.

\bibitem{imcs}
{Bulletin of the {I}nternational {M}embrane {C}omputing {S}ociety ({IMCS})}.
\newblock \url{http://membranecomputing.net/IMCSBulletin/index.php}.

\bibitem{IvanovP20}
Sergiu Ivanov and Ion Petre.
\newblock Controllability of reaction systems.
\newblock {\em Journal of Membrane Computing}, 2(4):290--302, 2020.

\bibitem{Lee2012}
Michael Lee, Albert~S. Ye, Alexandra~K. Gardino, Anne Heijink, Peter Sorger,
  Gavin Macbeath, and Michael Yaffe.
\newblock Sequential application of anti-cancer drugs enhances cell death by
  re-wiring apoptotic signaling networks.
\newblock {\em Cell}, 149:780--794, 05 2012.

\bibitem{Mandon19}
Hugues Mandon.
\newblock {\em Algorithms for Cell Reprogramming Strategies in Boolean
  Networks}.
\newblock PhD thesis, University of Paris-Saclay, France, 2019.

\bibitem{MandonHP17}
Hugues Mandon, Stefan Haar, and Lo{\"{\i}}c Paulev{\'{e}}.
\newblock Temporal reprogramming of {B}oolean networks.
\newblock In J{\'{e}}r{\^{o}}me Feret and Heinz Koeppl, editors, {\em
  Computational Methods in Systems Biology - 15th International Conference,
  {CMSB} 2017, Darmstadt, Germany, September 27-29, 2017, Proceedings}, volume
  10545 of {\em Lecture Notes in Computer Science}, pages 179--195. Springer,
  2017.

\bibitem{MandonSH0P19}
Hugues Mandon, Cui Su, Stefan Haar, Jun Pang, and Lo{\"{\i}}c Paulev{\'{e}}.
\newblock Sequential reprogramming of {B}oolean networks made practical.
\newblock In Luca Bortolussi and Guido Sanguinetti, editors, {\em Computational
  Methods in Systems Biology - 17th International Conference, {CMSB} 2019,
  Trieste, Italy, September 18-20, 2019, Proceedings}, volume 11773 of {\em
  Lecture Notes in Computer Science}, pages 3--19. Springer, 2019.

\bibitem{MandonSPPHP19}
Hugues Mandon, Cui Su, Jun Pang, Soumya Paul, Stefan Haar, and Lo{\"{\i}}c
  Paulev{\'{e}}.
\newblock Algorithms for the sequential reprogramming of {B}oolean networks.
\newblock {\em {IEEE} {ACM} Trans. Comput. Biol. Bioinform.}, 16(5):1610--1619,
  2019.

\bibitem{JMC2022}
Linqiang Pan and Gheorghe P\u{a}un, editors.
\newblock {\em Journal of Membrane Computing}, volume~4.
\newblock Springer, December 2022.

\bibitem{Pardo22}
J{\'{e}}r{\'{e}}mie Pardo.
\newblock {\em Computational method for the inference of therapeutic targets
  and sequence of treatments}.
\newblock PhD thesis, University of Paris-Saclay, France, 2022.

\bibitem{PardoID19}
J{\'{e}}r{\'{e}}mie Pardo, Sergiu Ivanov, and Franck Delaplace.
\newblock Sequential reprogramming of biological network fate.
\newblock In Luca Bortolussi and Guido Sanguinetti, editors, {\em Computational
  Methods in Systems Biology - 17th International Conference, {CMSB} 2019,
  Trieste, Italy, September 18-20, 2019, Proceedings}, volume 11773 of {\em
  Lecture Notes in Computer Science}, pages 20--41. Springer, 2019.

\bibitem{PardoID21}
J{\'{e}}r{\'{e}}mie Pardo, Sergiu Ivanov, and Franck Delaplace.
\newblock Sequential reprogramming of biological network fate.
\newblock {\em Theor. Comput. Sci.}, 872:97--116, 2021.

\bibitem{PaulSPM2018}
Soumya Paul, Cui Su, Jun Pang, and Andrzej Mizera.
\newblock A decomposition-based approach towards the control of {B}oolean
  networks.
\newblock In {\em Proceedings of the 2018 ACM International Conference on
  Bioinformatics, Computational Biology, and Health Informatics}, pages 11--20,
  2018.

\bibitem{PauleveKCH2020}
Lo{\"\i}c Paulev{\'e}, Juraj Kol{\v c}{\'a}k, Thomas Chatain, and Stefan Haar.
\newblock Reconciling qualitative, abstract, and scalable modeling of
  biological networks.
\newblock {\em Nature Communications}, 11(1), 2020.

\bibitem{Paun2000}
{\relax Gh}eorghe P\u{a}un.
\newblock Computing with membranes.
\newblock {\em Journal of Computer and System Sciences}, 61(1):108--143, 2000.

\bibitem{handMC}
{\relax Gh}eorghe P\u{a}un, Grzegorz Rozenberg, and Arto Salomaa, editors.
\newblock {\em The Oxford Handbook of Membrane Computing}.
\newblock Oxford University Press, 2010.

\bibitem{savitch1970relationships}
Walter~J. Savitch.
\newblock Relationships between nondeterministic and deterministic tape
  complexities.
\newblock {\em Journal of computer and system sciences}, 4(2):177--192, 1970.

\bibitem{SuPP2019}
Cui Su, Soumya Paul, and Jun Pang.
\newblock Controlling large {B}oolean networks with temporary and permanent
  perturbations.
\newblock In {\em International Symposium on Formal Methods}, pages 707--724.
  Springer, 2019.

\bibitem{VonDerHeydeBHSKB2014}
Silvia Von~der Heyde, Christian Bender, Frauke Henjes, Johanna Sonntag, Ulrike
  Korf, and Tim Beissbarth.
\newblock {B}oolean {ErbB} network reconstructions and perturbation simulations
  reveal individual drug response in different breast cancer cell lines.
\newblock {\em BMC Systems Biology}, 8(1):75, 2014.

\bibitem{Ward1994}
Martin Ward.
\newblock Language oriented programming.
\newblock {\em Software --- Concepts and Tools}, 15(4):147--161, 1994.

\bibitem{eMathHelp}
{eMathHelp} {M}ath {S}olver --- {B}oolean {A}lgebra {C}alculator.
\newblock
  \url{https://www.emathhelp.net/calculators/discrete-mathematics/boolean-algebra-calculator/}.

\bibitem{dds}
dds: {A} home-made toolkit for discrete dynamical systems in {R}acket.
\newblock \url{https://git.marvid.fr/scolobb/dds}.

\end{thebibliography}

\end{document}